\documentclass[hideLIPIcs,nolineno,a4paper,UKenglish,cleveref,autoref,numberwithinsect]{socg-lipics-v2021}
\usepackage{graphicx}
\usepackage{color}
\usepackage{amsmath}
\usepackage{amssymb}
\usepackage{xspace}
\usepackage{epsfig}
\usepackage[dvipsnames]{xcolor}
\usepackage{xfrac}
\usepackage{soul} 
\usepackage{type1cm}

\usepackage[colorinlistoftodos,prependcaption,textsize=tiny]{todonotes}

\newcommand {\mm}[1] {\ifmmode{#1}\else{\mbox{\(#1\)}}\fi}

\newcommand {\scalprod}[2] {{\left\langle #1 , #2 \right\rangle}}

\newcommand{\ignore}[1]{}

\newsavebox{\smallProofsym}                 
\savebox{\smallProofsym}                               %
{
\begin{picture}(6,6)
\put(0,0){\framebox(6,6){}}
\put(0,2){\framebox(4,4){}}
\end{picture} 
}  

\makeatletter
\long\def\@makecaption#1#2{%
  \vskip\abovecaptionskip
  \sbox\@tempboxa{\small #1: #2}%
  \ifdim \wd\@tempboxa >\hsize
    \small #1: #2\par
  \else
    \global \@minipagefalse
    \hb@xt@\hsize{\hfil\box\@tempboxa\hfil}%
  \fi
  \vskip\belowcaptionskip}
\makeatother

\theoremstyle{plain}

\newtheorem*{supermaintheorem*}{Main Theorem}

\newtheorem*{supermaincorollary*}{Main Corollary}
\newtheorem*{supermaindefinition*}{Main Definition}

\newcommand{\Types}         {\mm{{\mathcal E}}}
\newcommand{\Dspace}        {\mm{{\mathbb D}}}
\newcommand{\Gspace}        {\mm{{\mathbb G}}}

\newcommand{\Rspace}        {\mm{{\mathbb R}}}
\newcommand{\Sspace}        {\mm{{\mathbb S}}}

\newcommand{\Qspread}         {\mm{{Q}_{\rm sprd}}}

\newcommand{\fspread}         {\mm{{f}_{\rm sprd}}}

\newcommand{\transpose}     {\mm{\tt T}}

\newcommand{\Sd}[1]         {\mm{{\rm Sd\,}{#1}}}
\newcommand{\trace}[1]      {\mm{{\rm tr\,}{#1}}}

\newcommand{\norm}[1]       {\mm{\|{#1}\|}}
\newcommand{\Edist}[2]      {\mm{\|{#1}-{#2}\|}}
\newcommand{\geodist}[2]    {\mm{{d_{\rm geo}}{\left({#1},{#2}\right)}}}

\newcommand{\hexdist}[2]    {\mm{{d_{\rm hex}}{\left({#1},{#2}\right)}}}

\newcommand{\diff}          {\mm{\,}{\rm d}}

\title{Quadratic Forms for Measuring Geometric Trees in 3-dimensional Space}
\titlerunning{Quadratic Forms for Measuring Geometric Trees in 3-dimensional Space}

\author{Yossi Bokor Bleile}{ISTA (Institute of Science and Technology Austria), Kloster\-neu\-burg, Austria and The University of Sydney, Sydney, Australia}{yossi.bokorbleile@ist.ac.at}{https://orcid.org/0000-0002-4861-9174}{}

\author{Emanuele Cortinovis}{ISTA (Institute of Science and Technology Austria), Kloster\-neu\-burg, Austria}{emanuele.cortinovis@ist.ac.at}{https://orcid.org/0009-0007-1331-776X}{}

\author{Herbert Edelsbrunner}{ISTA (Institute of Science and Technology Austria), Kloster\-neu\-burg, Austria}{herbert.edelsbrunner@ist.ac.at}{https://orcid.org/0000-0002-9823-6833}{}

\author{Shota Uka}{Technical University of Vienna, Vienna, Austria, and ISTA (Institute of Science and Technology Austria), Kloster\-neu\-burg, Austria}{e1211760@student.tuwien.ac.at}{https://orcid.org/0009-0004-3676-4325}{}

\authorrunning{Bokor Bleile, Cortinovis, Edelsbrunner, Uka}

\Copyright{Bokor Bleile, Cortinovis, Edelsbrunner, Uka}

\ccsdesc[100]{Theory of computation~Computational geometry}

\keywords{Geometric graphs, measures, quadratic forms, Fisher metric, path decomposition, applications to dendrites.}

\funding{This research was partially funded by the Austrian Science Fund (FWF) 10.55776/ESP9584724. For open access purposes, the author has applied a CC BY public copyright license to any author accepted manuscript version arising from this submission.}

\begin{document}
\maketitle

\begin{abstract}
  Tree-like structures appear in many areas of science, and their shapes can help understand the underlying processes they drive or that give rise to them.
  By thinking of these structures as geometric graphs in $\Rspace^3$, we gain access to tools from computational geometry and topology to study them.
  In this paper, we adopt the theory of quadratic forms to measure the directional spread of geometric graphs, and we introduce the hexplot model---equipped with a metric derived from the Fisher metric on the standard triangle---to visualize, measure, and collect statistics.
\end{abstract}

\section{Introduction}
\label{sec:1}

Exploring the morphology of tree-like structures is instrumental to understanding both their functional properties and the processes that generate them. 
Applications range from signal transmission in neurons to crystal growth and material porosity in physical systems. Quantifying and comparing such branching structures across different contexts remains a central challenge. 
Classical approaches rely on local geometric descriptors (such as branch length, curvature, bifurcation angles, tortuosity, branch diameter and taper, and Strahler order), summary profiles (such as the width function and Sholl analysis), or global scalar statistics (such as total path length and tree asymmetry indices); see e.g.\ \cite{Asc07,Ben20,Moh15,Wat24}.

\begin{figure}[hbt]
  \centering \vspace{0.0in}
    \includegraphics[width=\linewidth, keepaspectratio]{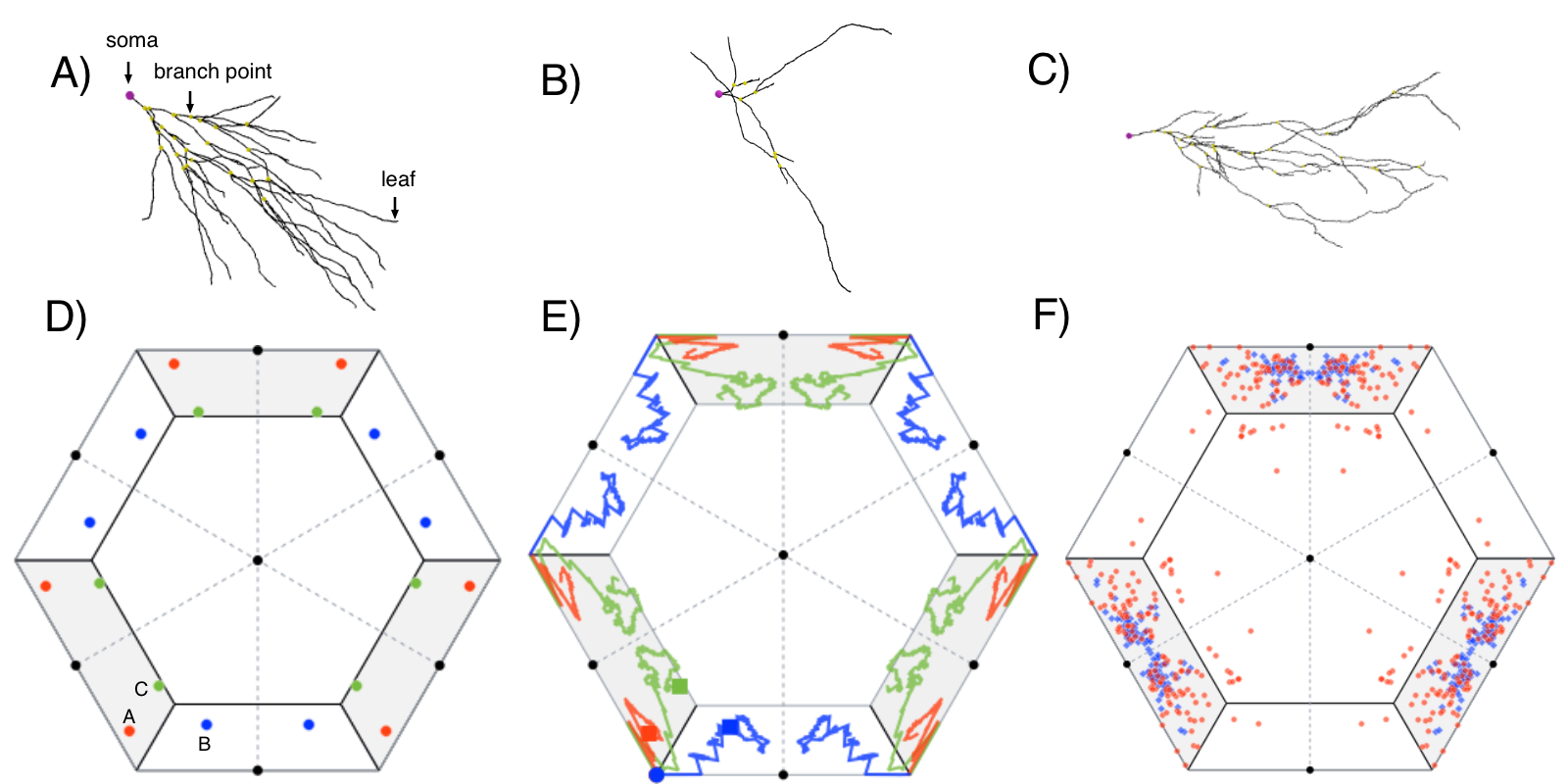}
  \vspace{-0.1in}
  \caption{\footnotesize A-C) Reconstructions of three apical dendrites (courtesy of Peter Jonas and Jake Watson at ISTA) from \emph{SWC} files.
  Each has the structure of a rooted geometric tree embedded in $\Rspace^3$.
  D) Each of the three dendrites corresponds to a $6$-tuple of points in the hexplot, of which the points in the lower left quadrangle are labeled.
  The hexplot has $6$-fold symmetry, with a quadrangular fundamental region indicated by dotted lines.
  E) Each dendrite is mapped to a continuous curve in the hexplot that reflects the evolution of the quadratic form as we consider the portion with progressively larger distance threshold from the soma.
  F) A comparison of two populations of dendrites in the hexplot.
  Points in the white and shaded rhombi represent dendrites with thin and elongated overall shape, respectively, and points in the central hexagon represent dendrites with overall round shape.}
  \label{fig:workflow}
\end{figure}
    
\smallskip
We focus on dendritic structures in $3$-dimensional space that can be viewed as geometric graphs embedded in $\Rspace^3$, thereby gaining access to geometric and topological tools that describe both local and global aspects of shape and branching. This representation enables principled comparisons based on the geometry, connectivity, and hierarchical organisation of the dendrites, and provides a natural interface for integrating local geometric measures (such as edge lengths, angles, and curvature) with topological invariants (such as graph homology and persistence summaries) that capture structural complexity across scales.

\smallskip
In this paper, we limit ourselves to quadratic forms whose level sets define ellipsoids; see e.g.\ \cite{HoJo13} for background on quadratic forms in general and \cite{Con97} for a delightful connection to number theory.
In particular, such quadratic forms allow us to quantify the \emph{directional spread} of a graph in $\Rspace^3$.
A quadratic form is given by a symmetric matrix, and the symmetric positive semi-definite $3$-by-$3$ matrices provide a natural representation of directional and spatial dispersion in $\Rspace^3$; for instance Schwartzman et al.\ \cite{SDT08} use the scatter matrix of axial data, whose eigenspectrum encodes both the mean orientation and the concentration of diffusion directions at each point in a spatial lattice. 
More generally, Schwartzman \cite{Sch16} develops log-normal distributions and geometric averaging operations on the cone of symmetric positive definite matrices, furnishing rigorous tools for statistical inference—including confidence regions and hypothesis tests—on populations of such quadratic forms, while respecting their positive-definiteness constraint.

\smallskip
To visualise the results, we introduce the \emph{hexplot model} for the family of quadratic forms defined by positive semi-definite $3$-by-$3$ matrices.\footnote{Using quadratic forms to measure directional spread can be generalised to higher dimensions, the hexplot model is however special to $3$-dimensional space.}
We equip this model with a metric structure obtained by conflating the Fisher metric on the standard triangle and its centrally reflected image.
Using this methodology, we construct unbiased descriptors of dendritic morphology that are invariant under rigid motion and scaling.
In addition, we use quadratic forms to recover the hierarchical organisation that distinguishes main from side branches.

\medskip
\noindent \textbf{Outline.}
Section~\ref{sec:2} introduces the mathematics of quadratic forms and explains how we use them to measure the directional spread of a geometric graph in $\Rspace^3$.
Section~\ref{sec:3} explains the hexplot model that visualizes the quadratic forms, including the metric defined on it, which quantifies the difference between two such forms.
Section~\ref{sec:4} uses the quadratic forms to decompose a tree into paths, and to visualize the directional spread as a function of the distance from its root.
Finally, Section~\ref{sec:5} concludes the paper by mentioning open questions and research directions.

\section{Squared Distance in 3 Dimensions}
\label{sec:2}

The average squared distance from a collection of planes in $\Rspace^3$ is given by a quadratic form whose level sets are ellipsoids; see \cite{HiCV52} for a general discussion of ellipses and ellipsoids.
We begin by introducing the relevant mathematical formalism and follow up by demonstrating how to use quadratic forms to measure the directional spread and length of a geometric graph in $\Rspace^3$.
The main motivating application is the study of neuronal morphologies.

\subsection{Mathematical Background}
\label{sec:2.1}

Consider an ellipsoid with axes of lengths $0 < 2a \leq 2b \leq 2c$ in $\Rspace^3$.
Assuming standard position, its axes align with the Cartesian coordinate axes in the same sequence, and the points of the ellipsoid have coordinates $x_1, x_2, x_3$ that satisfy
\begin{align}
  {x_1^2}/{a^2} + {x_2^2}/{b^2} + {x_3^2}/{c^2} &= 1.
  \label{eqn:ellipsoid}
\end{align}
The ellipsoid is called an \emph{oblate spheroid} if $a < b = c$, a \emph{prolate spheroid} if $a = b < c$, and a \emph{sphere} if $a = b = c$.
With some tolerance for the exact lengths, we will call the corresponding kinds of ellipsoids \emph{thin}, \emph{elongated}, and \emph{round}, in this sequence.
We also allow for degenerate cases, in which the extreme elongated ellipsoid is an \emph{elliptic cylinder} if $a \leq b < c = \infty$, and the extreme thin ellipsoid is a \emph{pair of planes} if $a < b = c = \infty$.

\smallskip
We call the left-hand side of \eqref{eqn:ellipsoid} a \emph{quadratic form} and the points for which it evaluates to $1$ the \emph{corresponding ellipsoid}.
Without insisting on the alignment of the axes but still centering the ellipsoid at the origin, we can write the quadratic form in matrix notation:
\begin{align}
  f(x) &= x^\transpose \cdot Q \cdot x
           =  \left[ ~x_1 ~x_2 ~x_3~ \right] \cdot
          \left[ \begin{array}{ccc}
                   A & D & F \\ D & B & E \\ F & E & C
          \end{array} \right] \cdot
          \left[ \begin{array}{c}
                   x_1 \\ x_2 \\ x_3
          \end{array} \right] 
    \label{eqn:matrix} \\
    &= A x_1^2 + B x_2^2 + C x_3^2 + 2D x_1 x_2 + 2E x_2 x_3 + 2F x_1 x_3 .
    \label{eqn:quadratic_form}
\end{align}
Without loss of generality, we may assume that the matrix is symmetric, and to specify an ellipsoid, it needs to be positive semi-definite; that is: $f(x) \geq 0$ for all $x \in \Rspace^3$.
We construct quadratic forms from planes that pass through the origin in $\Rspace^3$.
Letting $u \in \Sspace^2$ be the unit normal of such a plane, the squared distance of $x \in \Rspace^3$ from this plane is
\begin{align}
  \scalprod{x}{u}^2 
   &= \left( x^\transpose \cdot u \right) \cdot \left( u^\transpose \cdot x \right)
    = x^\transpose \cdot \left( u \cdot u^\transpose \right) \cdot x ,
\end{align}
in which the outer product, $u \cdot u^\transpose$, is a positive semi-definite matrix as in \eqref{eqn:matrix}.
Viewing the matrix as a linear transformation, there are generically three solutions to $Q \cdot x = \lambda x$, with $\lambda \in \Rspace$ and $x \neq 0$, referred to as the \emph{eigenvalues} and the corresponding unit \emph{eigenvectors} of the matrix; see standard texts in linear algebra, e.g.\ \cite{Str93}.
Because $Q$ is positive semi-definite, the eigenvalues are non-negative real numbers, denoted $\lambda_1, \lambda_2, \lambda_3$, with corresponding pairwise orthogonal eigenvectors, $e_1 , e_2 , e_3 \in \Sspace^2$.
The eigenvectors give the directions of the axes of the ellipsoid, and the eigenvalues specify their half-lengths, which are $1 / \sqrt{\lambda_1}$, $1 / \sqrt{\lambda_2}$, and $1 / \sqrt{\lambda_3}$.
There is a connection between the eigenvalues and the \emph{trace} of $Q$, which is the sum of diagonal entries, $A,B,C$, denoted $\trace{Q} = A+B+C$.
The trace defines a linear mapping; that is: $\trace{(Q+R)} = \trace{Q} + \trace{R}$ for any two $3$-by-$3$ square matrices $Q$ and $R$, and $\trace{(c \cdot Q)} = c \cdot \trace{Q}$ for any scalar $c$.
Importantly, the trace is equal to the sum of eigenvalues:
\begin{align}
  \trace{Q} &= \lambda_1 + \lambda_2 + \lambda_3 ;
\end{align}
see e.g.\ \cite{HoJo13}.
If $c \geq 0$ and $Q$ and $R$ are positive semi-definite, then so are $Q+R$ and $c \cdot Q$, and all their traces are non-negative.
The \emph{polar} of an ellipsoid, $E$, is the ellipsoid $E^*$ that bounds the points $y \in \Rspace^3$ satisfying $\scalprod{y}{x} \leq 1$ for every $x \in E$.
More specifically, if $x_0$ is a point of $E$, then the plane of points $y \in \Rspace^3$ with $\scalprod{x_0}{y} = 1$ touches $E^*$ at a single point, $y_0$, and the plane of points $x \in \Rspace^3$ that satisfy $\scalprod{y_0}{x} = 1$ touches $E$ in the point $x_0$.\footnote{This notion of polar body is heavily studied in the field of convex polytopes, but there seems to be a paucity in the literature on the special case of ellipsoids.}
\begin{figure}[hbt]
  \centering \vspace{-0.1in}
  \includegraphics[width=0.55\linewidth]{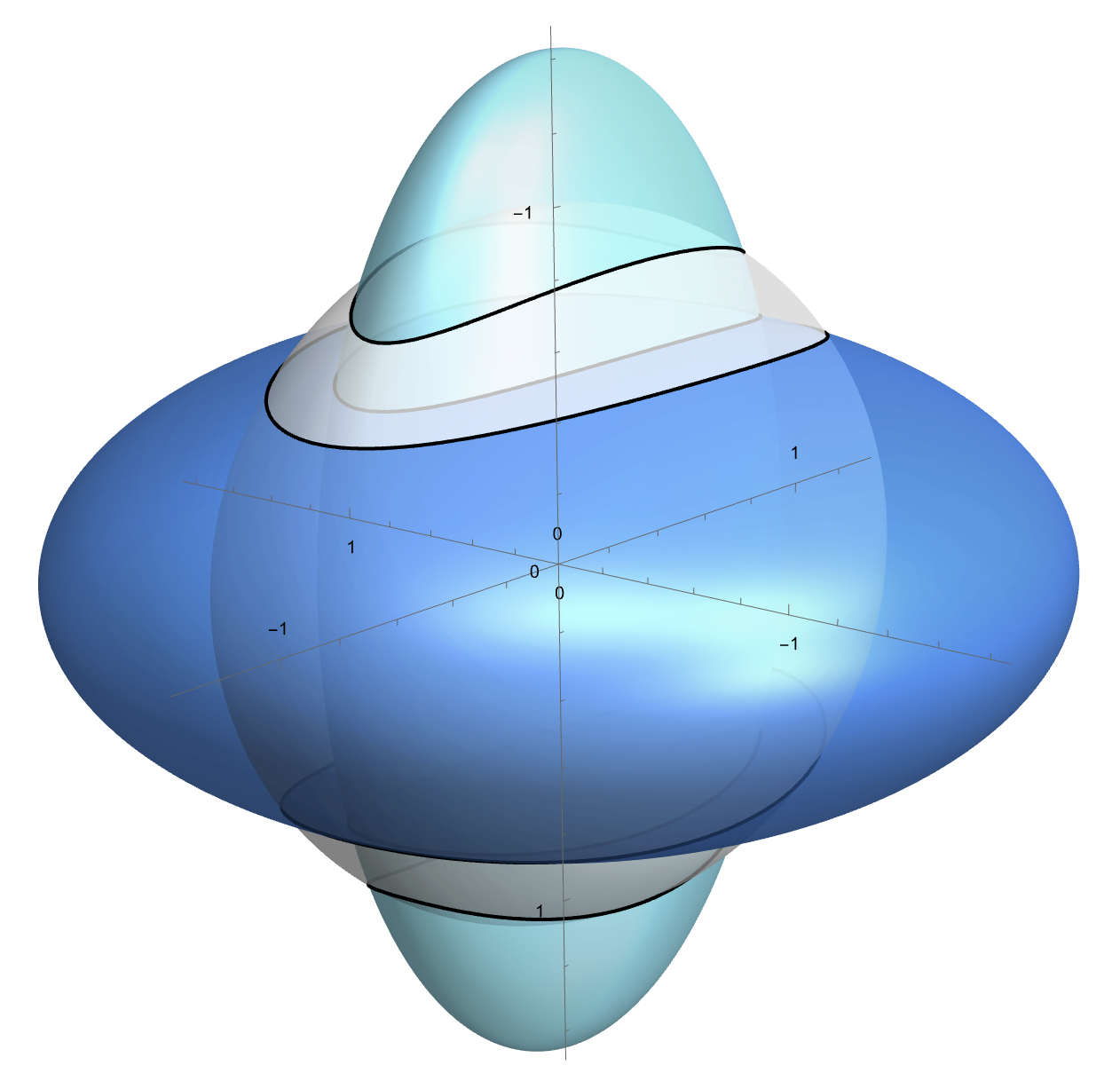}
  \vspace{-0.1in}
  \caption{\footnotesize A pair of polar ellipsoids, together with the unit sphere with respect to which polarity is defined.
  The axes of the \emph{blue} ellipsoid have half-lengths $1.8$, $1.3$, and $0.7$, while the axes of the polar ellipsoid have half-lengths $1/1.8$, $1/1.3$, and $1/0.7$, respectively.}
  \label{fig:polar}
\end{figure}
Indeed, the relation is symmetric for non-degenerate ellipsoids; that is: $E = (E^*)^*$.
The polar ellipsoid inherits the symmetries of the ellipsoid, so it is not difficult to see that the half-lengths of the axes of $E^*$ are the reciprocals of the half-lengths of the axes of $E$; see Figure~\ref{fig:polar} for an example.
This implies that the polar of a sphere is a sphere, and the oblate and prolate spheroids are polar to each other.

\smallskip
Throughout this paper, we will identify an ellipsoid with the quadratic form whose preimage of $1$ is the ellipsoid.

\subsection{The Eigenvalues Measure Directional Spread}
\label{sec:2.2}

We use the squared distance from planes to construct a quadratic form that measures the \emph{directional spread} of a collection of straight edges in $\Rspace^3$.
Letting $a_i, b_i \in \Rspace^3$ be the endpoints of the $i$-th edge in this collection, we set $w_i = \Edist{b_i}{a_i}$ and $u_i = (b_i - a_i) / w_i$.
 
The matrix of the quadratic form defined by these weights and vectors is
\begin{align}
  \Qspread &= \sum\nolimits_i w_i \left( u_i \cdot u_i^\transpose \right) ,
  \label{eqn:quadric_spread}
\end{align}
so $\fspread \colon \Rspace^3 \to \Rspace$ defined by 
\begin{align}
    \fspread (x) &= x^\transpose \cdot \Qspread \cdot x
                  = \sum\nolimits_i w_i x^\transpose \left( u_i \cdot u_i^\transpose \right)  x
\end{align} 
is the weighted sum of squared distances from the planes normal to the edges and passing through the origin of $\Rspace^3$.
The weight of each plane being the length of the corresponding edge implies that subdividing an edge into shorter edges does not affect the function.

\smallskip
The spread of directions is captured by the relations between the eigenvalues of $\Qspread$.
Specifically, the ellipsoid is round, elongated, of thin if $\Qspread$ has three roughly equal eigenvalues, one eigenvalue that is much smaller than the other two, or two eigenvalues that are much smaller than the third, respectively. 

Note that the polar of a round ellipsoid also tends to be round, while the polar of an elongated or thin ellipsoid tends to be thin and elongated, respectively.
A quantitatively more concrete expression of this characterization of shape can be formulated using the tools to be introduced in Section~\ref{sec:3}.

\subsection{The Trace Measures Length}
\label{sec:2.3}

Besides the directional spread, we can use the eigenvalues to measure length, namely by taking their sum.
Specifically, we have
\begin{theorem}
  \label{thm:trace_measures_length} 
  Let $a_i, b_i$ be the endpoints and $w_i = \Edist{b_i}{a_i}$ the length of the $i$-th edge in a collection in $\Rspace^3$, and $\fspread (x) = x^\transpose \cdot \Qspread \cdot x$ the quadratic form defined by weighting the squared distances from the normal planes with $w_i$.
  Then $\trace{\Qspread} = \sum_i \Edist{b_i}{a_i}$.
\end{theorem}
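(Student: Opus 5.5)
The plan is a direct computation that uses only the linearity of the trace recorded in Section~\ref{sec:2.1}. By \eqref{eqn:quadric_spread}, $\Qspread = \sum_i w_i \left( u_i \cdot u_i^\transpose \right)$ is a non-negative linear combination of outer products. Since $\trace{(Q+R)} = \trace{Q} + \trace{R}$ and $\trace{(c \cdot Q)} = c \cdot \trace{Q}$, it follows that
\begin{align*}
  \trace{\Qspread} &= \sum\nolimits_i w_i \, \trace{\left( u_i \cdot u_i^\transpose \right)} .
\end{align*}
The statement therefore reduces to evaluating the trace of a single outer product $u \cdot u^\transpose$ for a unit vector $u$.

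For this evaluation, write $u = (u_1, u_2, u_3)^\transpose$. The $(j,k)$-entry of $u \cdot u^\transpose$ is $u_j u_k$, so its diagonal entries are $u_1^2, u_2^2, u_3^2$. Hence
\[
  \trace{\left( u \cdot u^\transpose \right)} = u_1^2 + u_2^2 + u_3^2 = \norm{u}^2 = 1 ,
\]
because $u_i = (b_i - a_i)/w_i$ lies on $\Sspace^2$.

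The same fact can be seen through eigenvalues, using the identity $\trace{Q} = \lambda_1 + \lambda_2 + \lambda_3$. The matrix $u \cdot u^\transpose$ sends $u$ to $u \cdot (u^\transpose u) = u$ and annihilates the plane orthogonal to $u$. Its eigenvalues are therefore $1, 0, 0$, again giving trace $1$.

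Substituting back gives $\trace{\Qspread} = \sum_i w_i = \sum_i \Edist{b_i}{a_i}$. There is no real obstacle in this argument. The only point that needs care is degenerate edges with $a_i = b_i$, where $u_i$ is undefined. I would exclude these, or note that they contribute weight $w_i = 0$ to both sides, so the identity is unaffected.
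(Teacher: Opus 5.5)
Your proof is correct and follows the paper's argument: each weighted outer product $w_i (u_i \cdot u_i^\transpose)$ has trace $w_i$, and additivity of the trace then gives the sum of the edge lengths. The difference is only in detail: you spell out the computation $\trace{(u \cdot u^\transpose)} = \norm{u}^2 = 1$, which the paper leaves implicit, and your remark on zero-length edges is a harmless addition.
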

\begin{proof}
  Recall that $\Qspread$ includes one weighted outer product per edge, with trace equal to the length of the edge.
  It follows that the $i$-th edge contributes $w_i = \Edist{b_i}{a_i}$ to the trace of $\Qspread$.
  The claimed equations follows from the addiditivity of the trace.
\end{proof}

\section{The Hexplot}
\label{sec:3}

Thus far, the used quadratic forms are rotation and translation invariant.
To get a scale-independent representation of ellipsoids---which stresses the relation between the eigenvalues rather than their absolute sizes---we normalize, combine with the polar information, and draw them as sextuples of points in a regular hexagon.\footnote{Instead of ordering the eigenvalues, we have the symmetric group of degree (the six permutations of three elements) acting on the coordinates of every point.
Hence, the hexagon is a six-fold covering of the fundamental domain (a quadrangle) so that each triple of normalized eigenvalues maps to a sextuple of points in the hexagon, namely one point in each copy of the quadrangle.}
We show that this map is a homeomorphism to the hexagon, 
establish a metric on the hexagon, and generalize box plots to support the statistical analysis of collections of ellipsoids.

\subsection{Normalization and Antonelli Map}
\label{sec:3.1}

If we order the eigenvalues of an ellipsoid, we get a vector in the non-negative octant of $\Rspace^3$.
Normalizing this vector to unit length gives a point in the intersection of this octant and the unit sphere, denoted $\Sspace^2$.
Assuming $\lambda_1, \lambda_2, \lambda_3$ are the eigenvalues of $E$, then $1/\lambda_1, 1/\lambda_2, 1/\lambda_3$ are the eigenvalues of the polar, $E^*$,
and the normalized vectors are
\begin{align}
  \!\!\!\!\! \Lambda^\circ (E) &= \left\{ 
  \frac{\lambda_1}{\sqrt{\lambda_1^2+\lambda_2^2+\lambda_3^2}},
  \frac{\lambda_2}{\sqrt{\lambda_1^2+\lambda_2^2+\lambda_3^2}},
  \frac{\lambda_3}{\sqrt{\lambda_1^2+\lambda_2^2+\lambda_3^2}}
  \right\} ;
    \label{eqn:Lambdacirc} \\
  \!\!\!\!\! \Lambda^\circ (E^*) &= \left\{ 
  \frac{\lambda_2 \lambda_3}
       {\sqrt{\lambda_2^2 \lambda_3^2 + \lambda_1^2 \lambda_3^2 + \lambda_1^2 \lambda_2^2}},
  \frac{\lambda_1 \lambda_3}
       {\sqrt{\lambda_2^2 \lambda_3^2 + \lambda_1^2 \lambda_3^2 + \lambda_1^2 \lambda_2^2}},
  \frac{\lambda_1 \lambda_2}
       {\sqrt{\lambda_2^2 \lambda_3^2 + \lambda_1^2 \lambda_3^2 + \lambda_1^2 \lambda_2^2}} \right\} ,
    \label{eqn:Lambdacircstar}
\end{align}
in which we use set notation to indicate that each vector (of unordered components) corresponds to six conventional vectors whose components are ordered.

\smallskip
We note a side-effect of the normalization that will become important shortly.
Call $E$ \emph{non-degenerate} if all three eigenvalues are strictly larger than $0$.
For such an ellipsoid, the polar is unique, so both $\Lambda^\circ (E)$ and $\Lambda^\circ (E^*)$ are unique points in $\Sspace^2$.
This is no longer the case for a \emph{degenerate} ellipsoid, which has at least one vanishing eigenvalue.
Indeed, if $\lambda_1 = 0$ and $\lambda_2, \lambda_3$ are strictly positive, then $\Lambda^\circ (E^*) = \{1,0,0\}$ independent of the values of $\lambda_2$ and $\lambda_3$.
To cope with this ambiguity, we say two degenerate ellipsoids, $E$ and $E^*$, are \emph{polar} to each other if $E$ is the limit of a converging sequence of non-degenerate ellipsoids, $E_n$, such that $E^*$ is the limit of the sequence of non-degenerate polar ellipsoids, $E_n^*$.
Keep however in mind that different sequences converging to $E$ may give rise to different polar ellipsoids.

\smallskip
For visualization purposes, we further map the unit vectors to points in the \emph{standard triangle}, denoted $\Delta$, which is the intersection of the non-negative octant of $\Rspace^3$ with the plane of points that satisfy $x_1+x_2+x_3 = 1$.
Instead of further shortening the vectors, we prefer to use Antonelli's map \cite{Ant77}, which sends each coordinate of a unit vector to its square; that is:
\begin{align}
  \Lambda (E) &= \left\{ 
  \frac{\lambda_1^2}{\lambda_1^2+\lambda_2^2+\lambda_3^2},
  \frac{\lambda_2^2}{\lambda_1^2+\lambda_2^2+\lambda_3^2},
  \frac{\lambda_3^2}{\lambda_1^2+\lambda_2^2+\lambda_3^2}
  \right\};
    \label{eqn:Lambda} \\
  \Lambda (E^*) &= \left\{ 
  \frac{\lambda_2^2 \lambda_3^2}
       {\lambda_2^2 \lambda_3^2 + \lambda_1^2 \lambda_3^2 + \lambda_1^2 \lambda_2^2},
  \frac{\lambda_1^2 \lambda_3^2}
       {\lambda_2^2 \lambda_3^2 + \lambda_1^2 \lambda_3^2 + \lambda_1^2 \lambda_2^2},
  \frac{\lambda_1^2 \lambda_2^2}
       {\lambda_2^2 \lambda_3^2 + \lambda_1^2 \lambda_3^2 + \lambda_1^2 \lambda_2^2} \right\} ,
    \label{eqn:Lambdastar}
\end{align}
but note that the formula for $\Lambda (E^*)$ works only if at least two of the $\lambda_i$ are non-zero.
The main attraction of this map is it connects the geodesic distance on the sphere and the Fisher metric on the standard triangle.

\smallskip
Writing $\Lambda (E) = \{x_1,x_2,x_3\}$, it is customary to call $x_1, x_2, x_3$ the \emph{barycentric coordinates} of $\Lambda (E) \in \Delta$.
Keeping in mind that there are six permutations of the three coordinates, we note that $\Delta$ is also a $6$-fold covering of a smaller domain, namely a triangle in the barycentric subdivision of $\Delta$, denoted $\Sd{\Delta}$; see Figure~\ref{fig:hexplot}.
Indeed, the coordinates of any two points in the interior of a triangle in $\Sd{\Delta}$ have the same ordering.
Assuming for example the labeling of the vertices in Figure~\ref{fig:hexplot}, then every point $x_1 A + x_2 B + x_3 C$ in the interior of the south-west triangle, $\Delta_{\rm SW}$, satisfies $x_1 > x_2 > x_3$.
The reciprocals are ordered the opposite way, $1/x_1 < 1/x_2 < 1/x_3$, which implies that the corresponding point lies in the diagonally opposite triangle, which for $\Delta_{\rm SW}$ is the north-east triangle, $\Delta_{\rm NE}$.
\begin{figure}[hbt]
  \centering \vspace{0.1in}
  \includegraphics{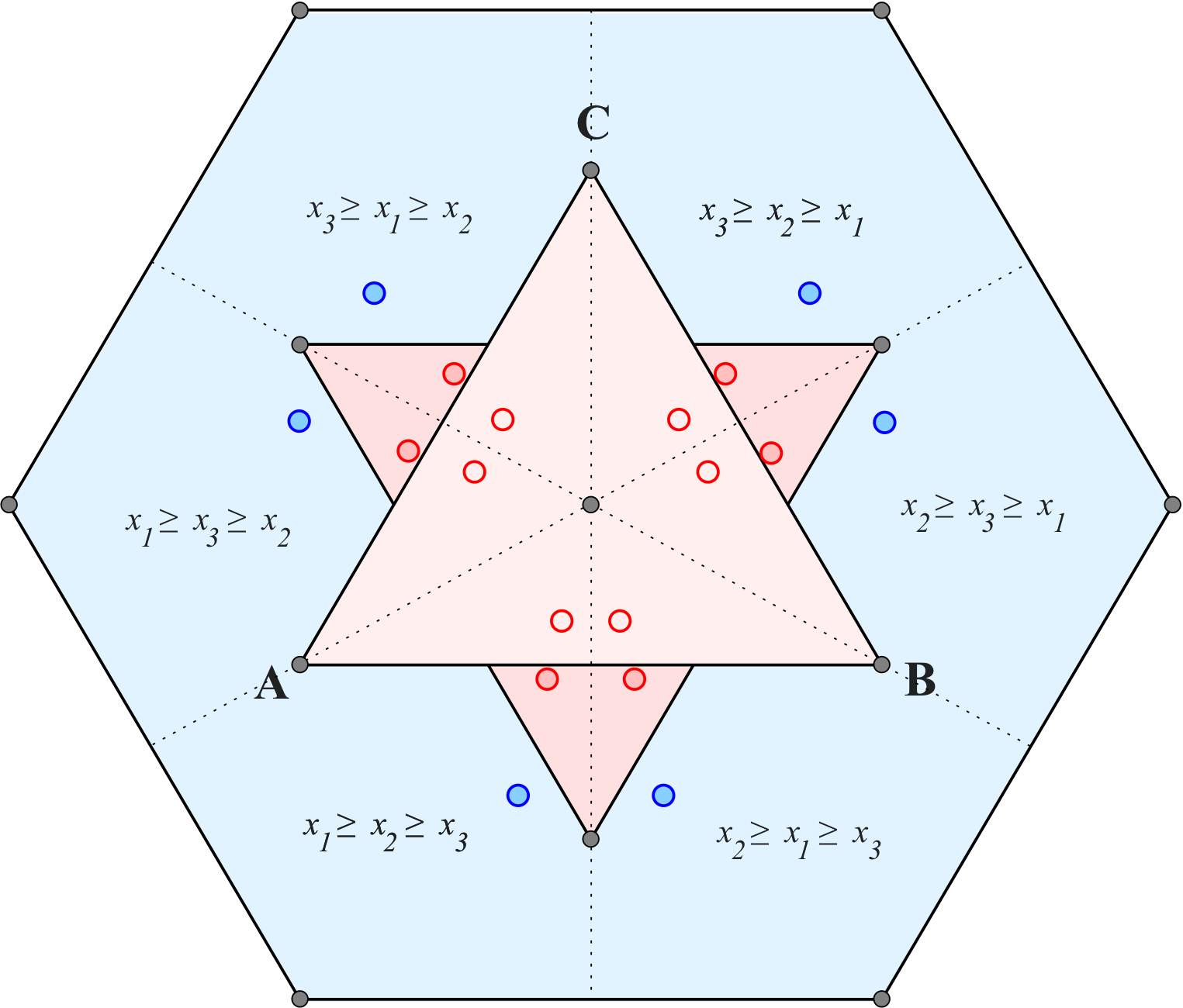}%
  \vspace{0.00in}
  \caption{\footnotesize The standard triangle, its central reflection, and the hexagon, each six-divided by the same three lines.
  An ellipsoid, $E$, maps to one point in each triangle of $\Sd{\Delta}$, $- E^*$ maps to one point in each triangle of $- \Sd{\Delta}$, and $(E,E^*)$ maps to one point in each quadrangle in the six-division of the hexagon.}
  \label{fig:hexplot}
\end{figure}

\smallskip
We combine the representation of an ellipsoid and its polar ellipsoid into one using the Minkowski sum of the standard triangle and its central reflection: $\Gspace = \Delta + (-\Delta)$,
which is a regular hexagon, as displayed in Figure~\ref{fig:hexplot}.
Specifically, the pair $(E, E^*)$ is sent to the point $\Gamma (E, E^*) = \Lambda (E) - \Lambda (E^*)$, or rather to a sextuple of points, one for each permutation of the eigenvalues of $E$.
Recall that $E$ is the polar ellipsoid of $E^*$, so $\Gamma (E^*, E) = - \Gamma (E, E^*)$ is well defined.
The barycentric subdivision of $\Delta$ extends to a division of $\Gspace$ into six quadrangles, each the Minkowski sum of a triangle in $\Sd{\Delta}$ with the central reflection of the opposite triangle in $\Sd{\Delta}$.
The pair $(E, E^*)$  thus maps to one point in the hexagon, which after unfolding is six points, one in each quadrangle, or three points if each is shared by two quadrangles, or one point if it is shared by all six quadrangles.
In the latter case, the point is $\Gamma (E, E^*) = \{0,0,0\}$, which implies $\Lambda(E) = \Lambda(E^*) = \{\frac{1}{3}, \frac{1}{3}, \frac{1}{3}\}$.
Indeed, every pair maps to a unique point in each hexagon, but this requires a proof.

\subsection{Bijectivity and Continuity}
\label{sec:3.2}

To prove properties of the map $\Gamma$ to the hexagon, we need to be precise about the class of ellipsoids that are mapped to the same point in $\Gspace$.
\begin{definition}
  \label{dfn:type}
  Two pairs of possibly degenerate ellipsoids, $(E,E^*)$ and $(D,D^*)$, have the same \emph{type} if $E$ and $D$ have the same set of three normalized eigenvalues, and so do $E^*$ and $D^*$; that is: $\Lambda(E) = \Lambda(D)$ and $\Lambda(E^*) = \Lambda(D^*)$. 
\end{definition}
We visualize the space of types as a $2$-dimensional submanifold with boundary in the shape of a hexagon in $\Rspace^6$, denoted $\Types$.
Each point in $\Types$ has six coordinates: the three barycentric coordinates of $\Lambda (E)$ and the three barycentric coordinates of $\Lambda (E^*)$.
The first three coordinates project $\Types$ to $\Delta$, and for almost all points of $\Types$, this projection is injective.
The only exception to injectivity are three sides of the hexagon, which project to the three vertices of $\Delta$.
Symmetrically, the second three coordinates project $\Types$ to $\Delta$, but now to the points of the polar ellipsoids, and injectivity is violated by the other three sides of the hexagon.
Indeed, these violations of injectivity are the main reason for defining $\Gamma \colon \Types \to \Gspace$ as a combination of two projections; that is: $\Gamma (E, E^*) = \Lambda (E) - \Lambda (E^*)$.
Importantly, this map is bijective, as we now prove.

\begin{lemma}[Bijectivity]
  \label{lem:bijectivity}
  The map $\Gamma \colon \Types \to \Gspace$ is bijective.
\end{lemma}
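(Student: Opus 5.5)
We work with a fixed ordering of the coordinates, so that $\Types$ is parametrized by ordered triples. The key observation is that both $\Lambda(E)$ and $\Lambda(E^*)$ are determined by the squared eigenvalues $\mu_i = \lambda_i^2$ up to scale. Projectively, $\Lambda(E) = [\mu_1:\mu_2:\mu_3]$ and $\Lambda(E^*) = [1/\mu_1 : 1/\mu_2 : 1/\mu_3]$. So for non-degenerate types it suffices to show that $x - y$ determines the point $x \in \operatorname{int}\Delta$, where $y$ is the normalized coordinatewise reciprocal of $x$. Degenerate types, where one or two $\mu_i$ vanish, form the boundary of $\Types$ and are handled separately.

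\emph{Step 1: injectivity in the interior.} Suppose $x, x' \in \operatorname{int}\Delta$ with reciprocal images $y, y'$, and suppose $x - y = x' - y'$. Write $y_i = (1/x_i)/s$ with $s = \sum_j 1/x_j$. I plan a monotonicity argument: the map $\phi(x) = x - y$ is the gradient of the strictly convex function
\[
  F(x) = \tfrac12 \sum\nolimits_i x_i^2 - \log \sum\nolimits_j x_j^{-1} \cdot (\text{suitably adjusted}).
\]
Alternatively, I will check directly that $\langle \phi(x) - \phi(x'), x - x' \rangle > 0$ for $x \neq x'$ in the plane $\sum x_i = 1$. The first term contributes $\|x - x'\|^2$. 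The second term, $-\langle y - y', x - x'\rangle$, should be non-negative, because $y$ is the gradient of $-\log \sum_j x_j^{-1}$, which is concave on the positive octant. Indeed, $\partial_i \log \sum_j x_j^{-1} = -x_i^{-2}/s$. This is not exactly $y_i$, so instead I will pass to log-coordinates $t_i = \log x_i$ and compare signs coordinatewise. There, $x_i$ is increasing in $t_i$ and $y_i$ is decreasing in $t_i$, after modding out the common scale. If $x \neq x'$, order the indices by $t_i - t'_i$. The index $k$ where this difference is maximal then has $x_k > x'_k$ and $y_k < y'_k$, so $\phi_k(x) > \phi_k(x')$, a contradiction. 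Making this extremal-index argument rigorous is the step I expect to be the main obstacle, since the normalizations shift every coordinate simultaneously.

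\emph{Step 2: boundary and surjectivity.} On the boundary, $\Types$ consists of six sides.
\begin{itemize}
\item[(a)] If one $\mu_i = 0$, then $\Lambda(E^*)$ is a vertex $e_i$, while $\Lambda(E)$ ranges over the opposite edge of $\Delta$. Then $\Gamma$ is that edge translated by $-e_i$, which is a side of $\Gspace$.
\item[(b)] If two $\mu_i$ vanish, the same picture holds with the roles of $E$ and $E^*$ swapped, via limits of polar pairs.
\end{itemize}
The six sides of $\Types$ map bijectively onto the six sides of the hexagon and are disjoint from the images of interior points. The latter holds because, for interior types, both $x$ and $y$ are interior, so $x - y$ lies in the interior of $\Gspace = \Delta + (-\Delta)$.

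For surjectivity, $\Gamma$ is continuous on the closed disk $\Types$ and restricts to a homeomorphism of the boundary circle onto $\partial\Gspace$ of degree one. A degree argument (no-retraction) then shows the image contains all of $\Gspace$. Combining this with injectivity gives bijectivity.
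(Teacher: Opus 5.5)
Your route is genuinely different from the paper's and is sound in outline, but two steps still need to be secured.

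\textbf{Injectivity on the interior.} The step you flag as the main obstacle is short once you compare ratios instead of differences. For $x \neq x'$ in the open triangle, let $k$ maximize $x_i/x'_i$. Since $\sum_i x_i = \sum_i x'_i = 1$, this maximum exceeds $1$, so $x_k > x'_k$. Next, $y_i/y'_i = c\, x'_i/x_i$ with $c$ independent of $i$, so the same $k$ minimizes $y_i/y'_i$. Since $\sum_i y_i = \sum_i y'_i = 1$, this minimum is at most $1$. Hence $y_k \le y'_k$, and therefore $x_k - y_k > x'_k - y'_k$. The normalizations enter only through the common factor $c$, so they do no harm.

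Do, however, discard your preliminary monotonicity idea. The claim $\langle y-y', x-x'\rangle \le 0$ is false, and $\phi$ is not even monotone near the boundary of $\Delta$. For $x = (0.001, 0.009, 0.99)$ and $x' = (0.0011, 0.0095, 0.9894)$ one gets $\langle y-y', x-x'\rangle \approx 1.4 \cdot 10^{-6} > \|x-x'\|^2 = 6.2 \cdot 10^{-7}$. This is consistent with Lemma~\ref{lem:min_angle} guaranteeing only $60^\circ$. Convexity does work in your log-coordinates: with $x_i \propto e^{t_i}$ and $y_i \propto e^{-t_i}$, the map $x - y$ is the gradient of $\log\sum_i e^{t_i} + \log\sum_i e^{-t_i}$, which is strictly convex on $\sum_i t_i = 0$.

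\textbf{Surjectivity.} Your degree argument presupposes that $\Types$ is a closed disk whose boundary circle is the union of the six sides. The paper only asserts this pictorially, and you do not prove it. You can bypass it. The map $\phi$ is continuous and injective on the open triangle, so by invariance of domain its image is open in the open hexagon. The image is also closed there. If $x_n \to \partial\Delta$, then along a subsequence either $y_n \to e_i$, so $\phi(x_n)_i \to -1$, or $x_n \to e_k$, so $\phi(x_n)_k \to 1$. In both cases $\phi(x_n)$ approaches $\partial\Gspace$. Connectedness then gives the whole open hexagon, and your Step~2 covers its boundary.

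\textbf{Comparison with the paper.} The paper's proof avoids topology beyond the intermediate value theorem. It fixes $z$ and uses that $x_i y_i = s$ is the same for every $i$. Each $x_i$ is then the nonnegative root of $x_i^2 - z_i x_i - s = 0$, and everything reduces to the scalar equation $F_z(s) = 1$ with $F_z$ strictly increasing. Monotonicity gives injectivity, the intermediate value theorem gives surjectivity, boundary types correspond to $s = 0$, and the inverse is computable by bisection in $s$. Your extremal-ratio argument is a clean alternative for injectivity. The price is the separate boundary analysis of Step~2 and a topological input (degree or invariance of domain) for surjectivity.
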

\begin{proof}
  We begin by proving the injectivity of $\Gamma$ while writing $x = \{x_1, x_2, x_3\}$ and $y = \{y_1, y_2, y_3\}$ for two points in $\Delta$, and $z = \{z_1, z_2, z_3\}$ for a point in $\Gspace$.
  Specifically, we show that for each $z \in \Gspace$, there is at most one $x \in \Delta$ such that $z = x-y$, in which $y = y(x)$ depends on $x$.
  For the six vertices of $\Gspace$, these are the six pairs of vertices of $\Delta$.
  We can therefore exclude them from the remainder of the argument and assume that at least two of the coordinates of $x$ are non-zero.
  Hence, 
  \begin{align}
    s &= \frac{x_1 x_2 x_3}{x_2 x_3 + x_1 x_3 + x_1 x_2}
    \label{eqn:normalize}
  \end{align}
  is well defined, and assuming $x = \Lambda (E)$, we have $y_i = s / x_i$ for the coordinates of $y = \Lambda (E^*)$.
  Multiplying with $x_i$ we get a quadratic equation, $x_i^2 - z_i x_i - s = 0$.
  It has either one or two solutions, and in the latter case, one solution is positive and the other negative.
  Since $x_i \geq 0$ we can discard the negative solution, so
  \begin{align}
    x_i &= \tfrac{1}{2} \left[ z_i + \sqrt{z_i^2 + 4s} \right] .
    \label{eqn:root2}
  \end{align}
  Fixing $z$ in $\Gspace$, we thus get unique points $x$ and $y = y(x)$ in $\Delta$ for each $s$.
  We now prove that there is at most one feasible choice for $s$.
  To this end consider the function that maps $s$ to the sum of right-hand sides in \eqref{eqn:root2} and its derivative:
  \begin{align}
    F_z (s) &= \tfrac{1}{2} \sum_{i=1}^3 \sqrt{ z_i^2 + 4s } ; 
      \label{eqn:Fz} \\
    \frac{\partial F_z}{\partial s} (s) &= \sum_{i=1}^3 \frac{1}{\sqrt{ z_i^2 + 4s }} ,
      \label{eqn:Fzprime}
  \end{align}
  in which we use $z_1 + z_2 + z_3 = 0$ to get \eqref{eqn:Fz}.
  Since $x_1 + x_2 + x_3 = 1$, we need $s$ such that $F_z (s) = 1$.
  Assuming $s > 0$, the derivative is well defined and positive, so $F_z$ is strictly increasing.
  Hence, if there is a solution it must be unique, which implies injectivity.

  \smallskip
  To see that $\Gamma$ is also surjective, we show that there is a solution for each $z \in \Gspace$.
  Observe that $\sum_i |z_i| = \sum_i |x_i-y_i| \leq \sum_i (x_i+y_i) = 2$, so the sum of the three absolute coordinates of $z$ is at most $2$.
  For $s = 0, 1$, we therefore have
  \begin{align}
    F_z (0) &= \tfrac{1}{2} \left[ |z_1| + |z_2| + |z_3| \right] \leq 1 ; \\
    F_z (1) &\geq \tfrac{1}{2} \left[ 2 + 2 + 2 \right] = 3 ,
  \end{align}
  and the intermediate value theorem implies the existence of an $s$ such that $F_z (s) = 1$.
\end{proof}

Before exploiting the bijectivity to get stronger properties of $\Gamma$, we present a technical lemma about the angle between the difference vectors defined by two ellipsoids, $E, D$, and their polars, $E^*, D^*$.
To formulate the claim, we let $x = \Lambda(E)$, $y = \Lambda(E^*)$ and $u = \Lambda(D)$, $v = \Lambda(D^*)$ be the corresponding points in $\Delta$, and assume their coordinates are sorted such that $x_1 \geq x_2 \geq x_3$ so $y_1 \leq y_2 \leq y_3$ and $u_1 \geq u_2 \geq u_3$ so $v_1 \leq v_2 \leq v_3$.
Setting
\begin{align}
  \frac{1}{s} &= \frac{1}{x_1} + \frac{1}{x_2} + \frac{1}{x_3}
  \mbox{\rm ~~and~~}
  \frac{1}{t} = \frac{1}{u_1} + \frac{1}{u_2} + \frac{1}{u_3} 
\end{align}
we see that the first relation agrees with \eqref{eqn:normalize}, and the coordinates of $y$ and $v$ are $y_i = s/x_i$ and $v_i = t/u_i$, for $i = 1,2,3$, respectively.
\begin{lemma}[Strong Cauchy--Schwarz Inequality]
  \label{lem:min_angle}
  Let $E, D$ be ellipsoids, $E^*, D^*$ their polars, and $x = \Lambda(E)$, $y = \Lambda(E^*)$, $u = \Lambda(D)$, $v = \Lambda(D^*)$ the corresponding points in $\Delta$.
  Then $\scalprod{u-x}{v-y} \leq \frac{1}{2} \Edist{u}{x} \Edist{v}{y}$, which for $u \neq x$ and $v \neq y$ is equivalent to having an angle of at least $60^\circ$ between the two vectors.
\end{lemma}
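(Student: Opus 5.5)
We work in the plane $H = \{ z \in \Rspace^3 : z_1 + z_2 + z_3 = 0 \}$. Write $p = u - x$ and $q = v - y$; both lie in $H$ because all four points lie in $\Delta$. The case $p = 0$ or $q = 0$ is trivial, so we assume $u \neq x$ and $v \neq y$. Since $H$ is a Euclidean plane, the claim is equivalent to $\cos\angle(p,q) \leq \frac{1}{2}$.

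The first step is to make $q$ explicit in terms of $p$. Using $y_i = s/x_i$ and $v_i = t/u_i$, we split each coordinate as
\begin{align}
  q_i &= \frac{t}{u_i} - \frac{s}{x_i} = \frac{t-s}{u_i} - \frac{s\, p_i}{x_i u_i} .
\end{align}
So $q = (t-s)\, a - M p$, where $a_i = 1/u_i$ and $M = \mathrm{diag}(s/(x_1u_1), s/(x_2u_2), s/(x_3u_3))$ is positive diagonal. The scalar $t-s$ is not free: the condition $\sum_i q_i = 0$ forces
\begin{align}
  t - s = \frac{\sum_i M_{ii}\, p_i}{\sum_i a_i} .
\end{align}
Hence $q = -\Pi M p$, where $\Pi$ is the oblique projection onto $H$ along the direction $a$. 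The whole statement then becomes a property of one explicit linear map $L = \Pi M$ on $H$, whose coefficients depend only on $x$ and $u$:
\begin{align}
  \scalprod{p}{Lp} &\geq -\tfrac{1}{2}\, \norm{p}\, \norm{Lp} \quad \mbox{for every } p \in H .
\end{align}
Equivalently, $L$ never rotates a vector of $H$ by more than $120^\circ$, so $-L$ rotates by at least $60^\circ$.

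The second step is to reduce this to a finite-dimensional inequality. Choose an orthonormal basis of $H$, write $p = (\cos\theta, \sin\theta)$, and let $L$ be a $2 \times 2$ matrix with entries rational in $m_i = M_{ii}$ and $a_i$. Then the required condition $\cos\angle(p,Lp) \geq -\frac{1}{2}$ for all $\theta$ is a statement about the numerical range of $L$ relative to $\norm{Lp}$. One clean sufficient condition is that the eigen-directions of the symmetric part of $L$ dominate its antisymmetric part in the appropriate ratio. I would check this via the quadratic inequality
\begin{align}
  4 \scalprod{p}{Lp}^2 - \norm{p}^2 \norm{Lp}^2 &\geq 0
\end{align}
whenever $\scalprod{p}{Lp} < 0$.

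The sorting hypotheses enter here. Because $x$ and $u$ are both sorted decreasingly, $a$ and the weights $m_i$ have a controlled ordering: $a_1 \leq a_2 \leq a_3$, and the $m_i$ increase as well. I expect the inequality to reduce to a polynomial inequality in the six positive quantities $x_i, u_i$ under these orderings. I would verify it by symmetrising and checking the boundary of the chamber, where some $x_i$ or $u_i$ equal each other or tend to $0$.

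This last step is the main obstacle. If the general inequality proves unwieldy, my first fallback is a homotopy argument. Fix $x$ and move $u$ along a path inside the sorted chamber, tracking the angle between $p$ and $q$. Near $u = x$ we have $q \approx -DF(x)\, p$, where $F \colon x \mapsto y$; there, $60^\circ$ is the sharp bound coming from the degenerate boundary configurations, with all $x_i$ extreme. The aim is then to show that the angle never crosses $60^\circ$ anywhere in the chamber.
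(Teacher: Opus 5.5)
\textbf{There is a genuine gap: the proposal sets up the right reduction but never proves the angle bound.}

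Your rewriting $q_i = (t-s)/u_i - s\,p_i/(x_iu_i)$ is correct. So is the resulting form $q = (t-s)a - Mp$, with $t-s$ fixed by $\sum_i q_i = 0$. But the proposal stops exactly where the content of the lemma lies. The angle bound for $L = \Pi M$ is only announced (``I expect'', ``I would verify''). The fallback homotopy needs, as its key step, that the angle never crosses $60^\circ$, and that is the lemma itself. So as it stands there is no proof.

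There are also three smaller problems:
\begin{itemize}
\item[(i)] The inequality you propose to check points the wrong way. When $\scalprod{p}{Lp} < 0$, the target $\scalprod{p}{Lp} \geq -\frac{1}{2}\norm{p}\norm{Lp}$ reads $4\scalprod{p}{Lp}^2 - \norm{p}^2\norm{Lp}^2 \leq 0$, not $\geq 0$.
\item[(ii)] The sorting is a red herring. The argument below works for any alignment of the coordinates of $x$ and $u$.
\item[(iii)] The vector $a$ and the matrix $M$ depend on $u = x+p$. Demanding the bound for every $p \in H$ with $L$ frozen is therefore stronger than the lemma. That stronger claim is true, but only because of the sign argument below.
\end{itemize}

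\textbf{The missing idea is a sign pattern that your own formula exposes; this is what the paper exploits.}
\begin{itemize}
\item[(1)] Since $a_i, m_i > 0$, a positive term $p_iq_i = (t-s)a_ip_i - m_ip_i^2 > 0$ forces $p_i$ and $q_i$ both to have the sign $\sigma$ of $t-s$. If $t = s$, every term is $\leq 0$ and you are done.
\item[(2)] Two positive terms at indices $i \neq j$ are impossible. Since $\sum_i p_i = \sum_i q_i = 0$, both $p_k$ and $q_k$ at the third index would have sign $-\sigma$. That gives a positive term of the wrong sign.
\item[(3)] Suppose $p_jq_j$ is the only positive term and $p_j^2 > \frac{1}{2}\norm{p}^2$. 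Because $p_k + p_l = -p_j$, this forces $p_kp_l > 0$, so both have sign $-\sigma$. Since $q_k + q_l = -q_j$, one of $q_k, q_l$ also has sign $-\sigma$, again a positive term of the wrong sign.
\item[(4)] Hence $p_j^2 \leq \frac{1}{2}\norm{p}^2$, and symmetrically $q_j^2 \leq \frac{1}{2}\norm{q}^2$. Therefore $\scalprod{p}{q} \leq p_jq_j \leq \frac{1}{2}\norm{p}\norm{q}$.
\end{itemize}
This closes the proof with no polynomial inequality, chamber analysis or continuity argument. The paper phrases the same argument via ``long'' and ``short'' factors, together with the observation that a matching pair with equal signs has the sign of $t-s$.
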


\begin{proof}
  The scalar product is the sum of three terms, each the product of two factors, which we call a \emph{matching pair}:
  \begin{align}
    \scalprod{u-x}{v-y} &= \sum_{i=1}^3 (u_i - x_i) \left( \frac{t}{u_i} - \frac{s}{x_i} \right) .
    \label{eqn:sp-one}
  \end{align}
  The respective first factors add to zero, $\sum_{i=1}^3 (u_i - x_i) = 0$, so two have the same sign and the third factor has a different sign.
  It is also possible that one or more of the three factors vanish, but we may think of this as a limiting case and assign signs arbitrarily.
  Accordingly, we call the factor with the unique sign \emph{long} and the two factors with the same sign \emph{short}.
  Similarly, the respective second factors add to zero, so we again have one long and two short factors.
  We claim that there are only two possible configurations of the six factors:
  \smallskip \begin{enumerate}
    \item[1.] none of the three matching pairs has two factors with the same sign; or
    \item[2.] exactly one of the three matching pairs has factors with the same sign, and these two factors are both short.
  \end{enumerate} \smallskip
  We prove this property and show that each possible configuration satisfies the claimed inequality.
  To begin consider a matching pair whose factors have the same sign.
  If $t \leq s$, then both factors are necessarily non-positive, and if $s \leq t$, then they are necessarily non-negative. 
  It follows that there cannot be two matching pairs of which one pair has two non-negative factors and the other has two non-positive factors.

  \smallskip
  Suppose first that the two long factors have different signs, which implies that also the first short and the second short factors have different signs.
  If we match the long factors with each other---and therefore the short factors with each other---we get three matching pairs, none of which has factors with the same sign.
  This is configuration 1, the scalar product is non-positive, so the angle is at least $90^\circ$ and therefore also at least $60^\circ$.
  On the other hand, if we match each long factor with a short factor, we get a contradiction because one matching pair has two non-negative factors while the other has two non-positive factors, which is impossible as argued above.

  \smallskip
  Suppose second that the two long factors have the same sign, which implies that all short factors have the same sign.
  If we match long with long and short with short, we get three matching pairs all with factors of the same sign, which is again impossible because this would mean one matching pair with two non-negative and another with two non-positive factors.
  The only remaining case is that we match each long factor with a short factor, which leaves one matching pair of two short factors.
  This is configuration 2.
  Letting $j$ be the corresponding index, we have
  \begin{align}
    \left( u_j - x_j \right)^2 
      &\leq \frac{1}{2} \left[ \sum_{i=1}^3 \left( u_i - x_i \right)^2 \right] 
    \mbox{\rm ~~and~~}
    \left( \frac{t}{u_j} - \frac{s}{x_j} \right)^2
      \leq \frac{1}{2} \left[ \sum_{i=1}^3 \left( \frac{t}{u_i} - \frac{s}{x_i} \right)^2 \right ] 
  \end{align}
  because both factors are short, which implies that in absolute value each is less than or equal to the corresponding absolute long factor.
  Hence, we have
  \begin{align}
    \!\!\!\!\left( u_j-x_j \right) \left( \frac{t}{u_j}-\frac{s}{x_j} \right)
      &\leq \sqrt{ \frac{1}{2} \sum_{i=1}^3 (u_i-x_i)^2 } \cdot \sqrt{ \frac{1}{2} \sum_{i=1}^3 \left( \frac{t}{u_i} - \frac{s}{x_i} \right) }
      = \frac{1}{2} \Edist{u}{x} \Edist{v}{y}.
  \end{align}
  But this is the only positive term in \eqref{eqn:sp-one}, so $\scalprod{u-x}{v-y} \leq \frac{1}{2} \Edist{u}{x} \Edist{v}{y}$, as claimed.
\end{proof}

\smallskip
We now go beyond bijectivity and prove that $\Gamma$ and $\Gamma^{-1}$ are both continuous.
To this end, we use the Euclidean metric in $\Rspace^3$ to measure distances in $\Delta$ and $\Gspace$, and the Euclidean metric in $\Rspace^6$ to measure distances in $\Types$.
For points $x, y, u, v \in \Delta$, we therefore write
\begin{align}
  d_\Types ((u, v), (x, y)) &= \sqrt{ \Edist{u}{x}^2 + \Edist{v}{y}^2 } .
\end{align}
Letting $z = x-y$ and $w = u-v$ be the corresponding points in $\Gspace$, we note that $\Edist{w}{z} \leq \Edist{u}{x} + \Edist{v}{y}$ by the triangle inequality, and $d_\Types ((u,v), (x,y)) \leq \Edist{u}{x} + \Edist{v}{y}$ because the square root function is concave.
\begin{theorem}[Bi-Lipschitz Homeomorphism]
  \label{thm:bi-Lipschitz_homeomorphism}
  The map $\Gamma \colon \Types \to \Gspace$ is a bi-Lipschitz homeomorphism, in which $\Gamma$ and $\Gamma^{-1}$ are both Lipschitz continuous with constant $2$.
\end{theorem}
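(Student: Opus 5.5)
Given the bijectivity from Lemma~\ref{lem:bijectivity}, it suffices to prove two Lipschitz bounds: $\Edist{w}{z} \leq 2\, d_\Types((u,v),(x,y))$ for $\Gamma$, and $d_\Types((u,v),(x,y)) \leq 2\,\Edist{w}{z}$ for $\Gamma^{-1}$. Here $z = x-y$ and $w = u-v$. Lipschitz continuity of both maps implies continuity, so together with bijectivity this gives the bi-Lipschitz homeomorphism. Throughout, write $a = \Edist{u}{x}$ and $b = \Edist{v}{y}$.

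\emph{Forward direction.} The triangle inequality gives $\Edist{w}{z} \leq a + b$. By Cauchy--Schwarz, $a + b \leq \sqrt{2}\sqrt{a^2+b^2} = \sqrt{2}\, d_\Types$. So $\Gamma$ is Lipschitz with constant $\sqrt{2} \leq 2$. This step uses nothing beyond what the paper notes just before the statement.

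\emph{Inverse direction.} This is where Lemma~\ref{lem:min_angle} does the work. Expand
\begin{align}
  \Edist{w}{z}^2 = \Edist{(u-x)-(v-y)}{0}^2 = a^2 + b^2 - 2\scalprod{u-x}{v-y} .
\end{align}
The strong Cauchy--Schwarz inequality bounds $2\scalprod{u-x}{v-y} \leq ab$, so $\Edist{w}{z}^2 \geq a^2 + b^2 - ab$. Since $ab \leq \frac{1}{2}(a^2+b^2)$, we get
\begin{align}
  \Edist{w}{z}^2 \geq \tfrac{1}{2}(a^2+b^2) = \tfrac{1}{2}\, d_\Types^2 ,
\end{align}
that is, $d_\Types \leq \sqrt{2}\,\Edist{w}{z} \leq 2\,\Edist{w}{z}$. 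Geometrically, the $60^\circ$ angle bound prevents the two difference vectors from nearly cancelling. Since Lemma~\ref{lem:min_angle} is available, this computation is short and I expect no real obstacle here.

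The main care point is the degenerate boundary. The lemma is stated for ellipsoids with well-defined polars in $\Delta$, and the formulas $y_i = s/x_i$ fail when coordinates vanish. I would first prove both inequalities for pairs in the interior of $\Types$, where all coordinates are positive. I would then extend them to all of $\Types$ by density and continuity of the norms, since $\Types$ is the closure of its interior points and the inequalities are closed conditions. The paper's sign convention, assigning signs arbitrarily to vanishing factors, supports the same limiting argument. Finally, I would note that $\sqrt{2}$ is the constant actually obtained, which implies the stated constant $2$.
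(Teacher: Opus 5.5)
Your proof is correct and follows the paper's route. It uses bijectivity from Lemma~\ref{lem:bijectivity}, the triangle inequality for $\Gamma$, and the strong Cauchy--Schwarz inequality of Lemma~\ref{lem:min_angle} (the $60^\circ$ angle in the triangle with vertices $0$, $u-x$, $v-y$) for $\Gamma^{-1}$. Where the paper bounds $\|w-z\|$ by twice the larger of $a,b$ and derives $a+b\le 2\|w-z\|$, your estimates $a+b\le\sqrt{2}\sqrt{a^2+b^2}$ and $a^2+b^2-ab\ge\frac{1}{2}(a^2+b^2)$ give the sharper constant $\sqrt{2}$ in both directions, which implies the stated constant $2$.
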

\begin{proof}
  By Lemma~\ref{lem:bijectivity}, $\Gamma$ is bijective, so $\Gamma^{-1}$ exists.
  To see the Lipschitz continuity of $\Gamma$, consider ellipsoids $E, D$, and write $x = \Lambda(E)$, $y = \Lambda(E^*)$, $u = \Lambda(D)$, $v = \Lambda(D^*)$ as well as $z = x-y, w = u-v$.
  We have $\Edist{w}{z} \leq \Edist{u}{x} + \Edist{v}{y}$, and since the larger of the latter two distances is at least half of $\Edist{w}{z}$ and at most $d_\Types \left( (u,v), (x,y) \right)$, we get
  \begin{align}
    \Edist{w}{z} &\leq 2 d_\Types ((u,v), (x,y)) ,
    \label{eqn:homeo}
  \end{align}
  as needed.
  To see the Lipschitz continuity of $\Gamma^{-1}$, we consider the triangle with vertices $0$, $u-x$, $v-y$, and note that the edge opposite the vertex $0$ has length $\Edist{w}{z}$.
  By Lemma~\ref{lem:min_angle}, the angle at $0$ is at least $60^\circ$, which implies
  \begin{align}
    \Edist{u}{x} + \Edist{v}{y} &\leq 2 \Edist{w}{z} .
      \label{eqn:Lipschitz_two}
  \end{align}
  The left-hand side of \eqref{eqn:Lipschitz_two} is at least $d_\Types ((u,v), (x,y))$, which implies that $\Gamma^{-1}$ is Lipschitz continuous with constant $2$.
\end{proof}

\begin{remark*}
  The continuity of $\Gamma^{-1}$ can also be proven by appealing to a version of the Inverse Function Theorem, and thus without using Lemma~\ref{lem:min_angle}.
  In particular, \cite[Lemma~A.19]{Lee12} asserts that a continuous and bijective map from a compact space to a Hausdorff space is a homeomorphism.
  Indeed, the space of types is closed since it includes the limits of pairs $(x_n, y_n) = \left( \Lambda(E_n), \Lambda(E_n^*) \right)$, and it is bounded because every pair satisfies $\norm{x_n} \leq 1$ and $\norm{y_n} \leq 1$.
  This proof from general principles does however not imply that $\Gamma^{-1}$ is Lipschitz continuous.
\end{remark*}

\subsection{Hexagonal Box Plots}
\label{sec:3.5}
  
Letting $a_0, a_1, \ldots, a_n$ be an ordered sequence of real numbers, its \emph{box plot} is a graphical representation of five of these numbers: minimum, first quartile, median, third quartile, and maximum.
Assuming $n = 4k$, these are $a_0$, $a_{k}$, $a_{2k}$, $a_{3k}$, and $a_{4k}$.
The traditional way of drawing it is an axes aligned rectangle that stretches vertically from the first to the third quartile, has a horizontal bar at the median, and whiskers down to the minimum and up to the maximum; see the left panel of Figure~\ref{fig:boxplot}.
\begin{figure}[hbt]
  \centering
  \hspace{-0.0in}
  \includegraphics[width=\linewidth]{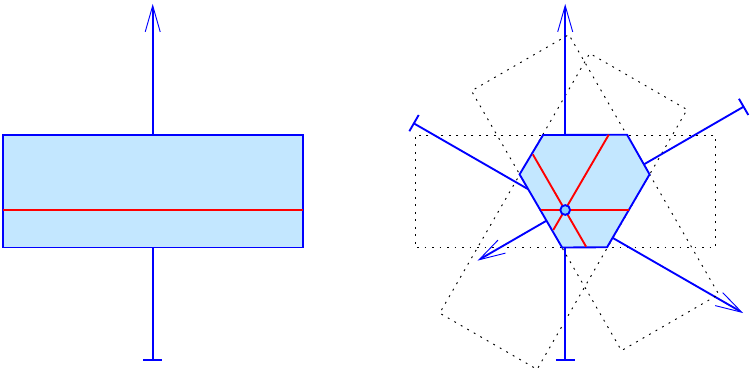}
  \caption{\footnotesize \emph{Left:} the conventional box plot of an empirical distribution.
  \emph{Right:} the hexagonal box plot, which combines three conventional box plots in one.}
  \label{fig:boxplot}
\end{figure}

For each ellipsoid, we get three eigenvalues and therefore three related box plots, which we combine by intersecting the boxes drawn on top of each other, centered at the shared midpoint of the three median lines, and rotated by $120^\circ$ relative to each other; see the right panel of Figure~\ref{fig:boxplot}.
We call this the \emph{hexagonal box plot}.
\begin{remark*}
  It is possible that the three strips defining a hexagonal box plot intersect in a convex polygon with fewer than six sides.
  However, in this case an extreme quarter of the population is the same for all three eigenvalues, so we may as well assume that the missing side has contracted to a corner of the plot.
  In other words, the corresponding line cannot be far from this corner, which justifies we still call it a \emph{hexagonal} box plot.
\end{remark*}

\subsection{The Metric}
\label{sec:3.6}

We use the Antonelli map once again, this time to define a metric on the hexagonal model of ellipsoid types by pulling back the geodesic distance on the unit sphere.

\begin{definition}
  \label{dfn:hexplot_Fisher_metric}
  The \emph{hexplot Fisher metric} is the map $d_{\rm hex} \colon \Gspace \times \Gspace \to \Rspace$ defined by
  \begin{align}
    \hexdist{z}{w} &= \geodist{\Lambda^\circ(E)}{\Lambda^\circ(D)} + \geodist{\Lambda^\circ(E^*)}{\Lambda^\circ(D^*)} ,
  \end{align}
  in which $z = \Gamma (E, E^*)$, $w = \Gamma (D, D^*)$, and $d_{\rm geo}$ denotes the geodesic distance between points on the unit sphere.
\end{definition}
To see that $d_{\rm hex}$ is indeed a metric, we note that $\hexdist{z}{w} = 0$ iff $z=w$, $\hexdist{z}{w} = \hexdist{w}{z}$ for all $z, w \in \Gspace$, and $\hexdist{z}{w} \leq \hexdist{z}{p} + \hexdist{p}{w}$ for a third point $p \in \Gspace$, simply because the triangle inequality holds for the geodesic distance on the sphere.

\begin{figure}[hbt]
  \centering \vspace{0.05in}
  \includegraphics[width=0.95\linewidth]{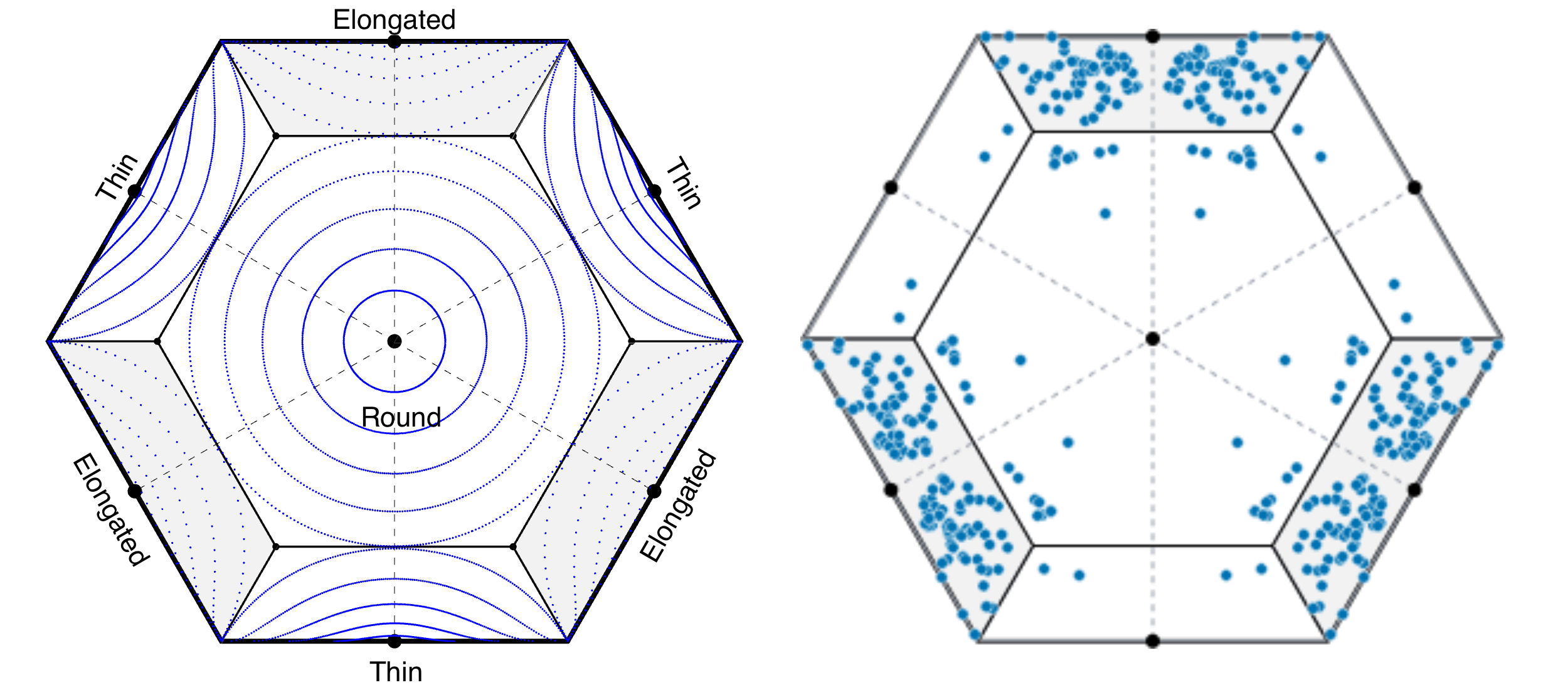}
  \vspace{-0.05in}
  \caption{\footnotesize \emph{Left:} an approximation of the Fisher--Voronoi tessellation of the origin and the midpoints of the six sides of the hexagon, together with five level sets each of the hexplot Fisher distance from these seven points.
  The \emph{shading} of the rhombi indicates that there are two kinds: the \emph{white} rhombi of thin or pancake-like polar ellipsoids and \emph{shaded} rhombi of elongated or cigar-like polar ellipsoids.
  \emph{Right:} an example population of dendritic structures, plotted using the \texttt{Quadrix} software \cite{quadrix}.
  The points range from elongated (inside the \emph{shaded} rhombi) to round (in the central region of the hexagon).
  The sparse population in the \emph{white} rhombi implies that thin polar ellipsoids are rare in this population.}
  \label{fig:Voronoi_and_levels}
\end{figure}

\begin{remark*}
  The pull-back of the geodesic distance on an orthant of the sphere to the standard simplex of the same dimension is traditionally known as the \emph{Fisher information metric} or the \emph{Fisher--Rao metric} between discrete probability distributions; see e.g.\ \cite{AmNa00}.
  It can alternatively be derived from the Shannon entropy by considering the Kullback--Leibler divergence \cite{KuLe51}, which measures the information loss if an encoding optimized for a different distribution is used.
  This divergence behaves like a distance but is not symmetric and also violates the triangle inequality.
  Taking the divergence to the infinitesimal limit and measuring a path by integrating the infinitesimal steps along it, we obtain the Fisher information metric from the shortest paths between their endpoints.
\end{remark*}

\smallskip
The geometry of the hexplot Fisher metric is not immediately obvious, so we visualize it by drawing the Voronoi tessellation of the center---the point $\{0,0,0\}$---and the midpoints of the six sides of the hexplot; which are really only two points: $\{\frac{1}{2}, \frac{1}{2}, -1\}$ for the white rhombi, and $\{-\frac{1}{2}, - \frac{1}{2}, 1\}$ for the shaded rhombi; see the left panel of Figure~\ref{fig:Voronoi_and_levels}.
The overall shape of a dendrite is shared by the polar ellipsoid of the pair, and if the representing point lies in the regions of $\{0,0,0\}$, $\{\frac{1}{2}, \frac{1}{2}, -1\}$, and $\{-\frac{1}{2}, - \frac{1}{2}, 1\}$, then we may call this shape round, thin (or pancake-like), and elongated (or cigar-like), respectively.
After unfolding, we have seven regions: the hexagon in the middle and six rhombi surrounding it.
These regions look like convex polygons, but at closer inspection it seems that the arcs that separate the hexagon from the six rhombi are not entirely straight.
The left panel of Figure~\ref{fig:Voronoi_and_levels} also show the level lines of the hexplot Fisher distance inside each region.
Here the level lines of the center are remarkably close to circles, and only for larger radii the subtle adaption to a more hexagonal shape becomes visible.

\section{Paths and Quadratic Forms}
\label{sec:4}

The quadratic forms can also be used to segment a dendritic structure into higher-level components than the edges, which includes the distinction between main trunk and side branches studied in this section.
Assuming information about the local thickness at every vertex, we could construct the main trunk by repeatedly moving to the child with the maximum radius, and compute the side branches recursively.
We consider another criterion that uses quadratic forms to make the choices.

\subsection{Path Decompositions}
\label{sec:4.1}

A \emph{path} is a linear sequence of edges in the tree, and it is \emph{maximal} if it starts and ends at a leaf each.
By a \emph{path decomposition} we mean a collection such that each edge belongs to exactly one path.
We do not allow two paths to pass through the same fork---even if the degree of that fork is four or larger---so we call a path decomposition \emph{minimal} if every fork of degree $k+1$ is interior to exactly one path and endpoint of $k-1$ paths.
We call a minimal decomposition \emph{hierarchical} if there is exactly one maximal path and every other path starts at a fork and ends at a leaf.
In such a decomposition, the maximal path is at the top of the hierarchy, all paths that start at interior vertices of the maximal path are its children, and the remaining paths are descendants further down the hierarchy.

\smallskip
To construct a hierarchical path decomposition, we assume a distinguished leaf, $r$, called the \emph{root}, and we direct each edge of the tree away from $r$.
If the tree is the representation of a neuronal dendrite, $r$ would be an artificially added node whose only child is the soma.
We thus have a tree in the computer science sense: other than the root, each node has a unique parent, and the remaining adjacent nodes are its children.
We construct the hierarchy top down and proceed greedily such that, given the starting point,
\smallskip \begin{description}
  \item[Criterion:] each selected path is to be as straight as possible.
\end{description} \smallskip
To quantify this criterion, let $f (x) = x^\transpose \cdot Q \cdot x$ be the quadratic form that measures the directional spread of a path, and write $(E, E^*)$ for the corresponding pair of ellipsoids.
This path is completely straight if $\Lambda(E) = \{ 0, 0, 1 \}$.
Setting $\Lambda(E^*) = \{\frac{1}{2}, \frac{1}{2}, 0 \}$, the corresponding point in the hexagon is $\Gamma_0 = \{ -\frac{1}{2}, -\frac{1}{2}, 1 \}$.
In the general case, we use the hexplot Fisher distance of $\Gamma (E, E^*)$ from $\Gamma_0$ to quantify how much the path deviates from being straight.

\smallskip
We need notation to describe the algorithm that computes the paths maximizing this criterion.
For a node $b$ of the tree, we write $k(b)+1$ for its degree, $b_0$ for the parent, and $b_1, b_2, \ldots, b_{k(b)}$ for the children.
For a given points $a$ and $b$, we write $f_{a,b} (x) = x^\transpose \cdot Q_{a,b} \cdot x$ for the quadratic form of the edge connecting the points, as defined in \eqref{eqn:quadric_spread}.
While traversing the tree, the algorithm computes the optimizing path starting at any edge in the tree, and stores the last point and the quadratic form of this path at this edge.
The algorithm is recursive, and we initially call it for the edge from the root, $r$, to its only child, $r_1$:
\begin{tabbing}
  m\=m\=mn\=mn\=mn\=mn\=mn\=mn\=mn\=mn\=\kill
  {\footnotesize 01} \> \> $\mbox{\tt function~} \mbox{\sc Pre-order} (a, b)$: initialize $\ell = b$ and $Q_\ell = Q_{a, b}$; \\*
  {\footnotesize 02} \> \> \> {\tt for} $i=1$ {\tt to} $k(b)$ {\tt do} $(\ell_i, Q_i) = \mbox{\sc Pre-order} (b, b_i)$; $Q = Q_{a,b} + Q_i$; \\*
  {\footnotesize 03} \> \> \> \> {\tt if} $\hexdist{\Gamma_0}{\Gamma(E, E^*)} < d_\ell$ {\tt then} $\ell = \ell_i$; $Q_\ell = Q$; $d_\ell = \hexdist{\Gamma_0}{\Gamma (E, E^*)}$ {\tt endif} \\*
  {\footnotesize 04} \> \> \> {\tt endfor};
    store $(\ell, Q_\ell)$ at the edge from $a$ to $b$ and
           {\tt return} $(\ell, Q_\ell)$.
\end{tabbing}
The algorithm gathers enough information so that the entire path decomposition can be extracted in a single additional traversal of the tree:
walk the path that starts at the root backward while labelling all its nodes, and recurse along the way to explore side paths at every fork.
Importantly, for each such side path, its last node has already been computed and stored at its first edge.
In summary, the decomposition that constructs paths greedily according to the straightness criterion runs in time linear in the number of nodes in the tree.

\smallskip
Other reasonable criteria for decomposing into paths are to maximize length, maximize centrality, or possibly combine different criteria into a weighted compromize.

\subsection{Evolving Quadratic Forms}
\label{sec:4.4}

To probe the shape of a dendrite near the beginning, near the end, and in between, we evaluate the quadratic form at every distance from its root.
Write $\Dspace$ for the tree in $\Rspace^3$, $r_0$ for its root, and $d_{\rm int} \colon \Dspace \to \Rspace$ for the map that sends each point $y \in \Dspace$ to its intrinsic distance from $r_0$.
\begin{figure}[hbt]
  \centering \vspace{-0.1in}
  \includegraphics[width=0.95\linewidth]{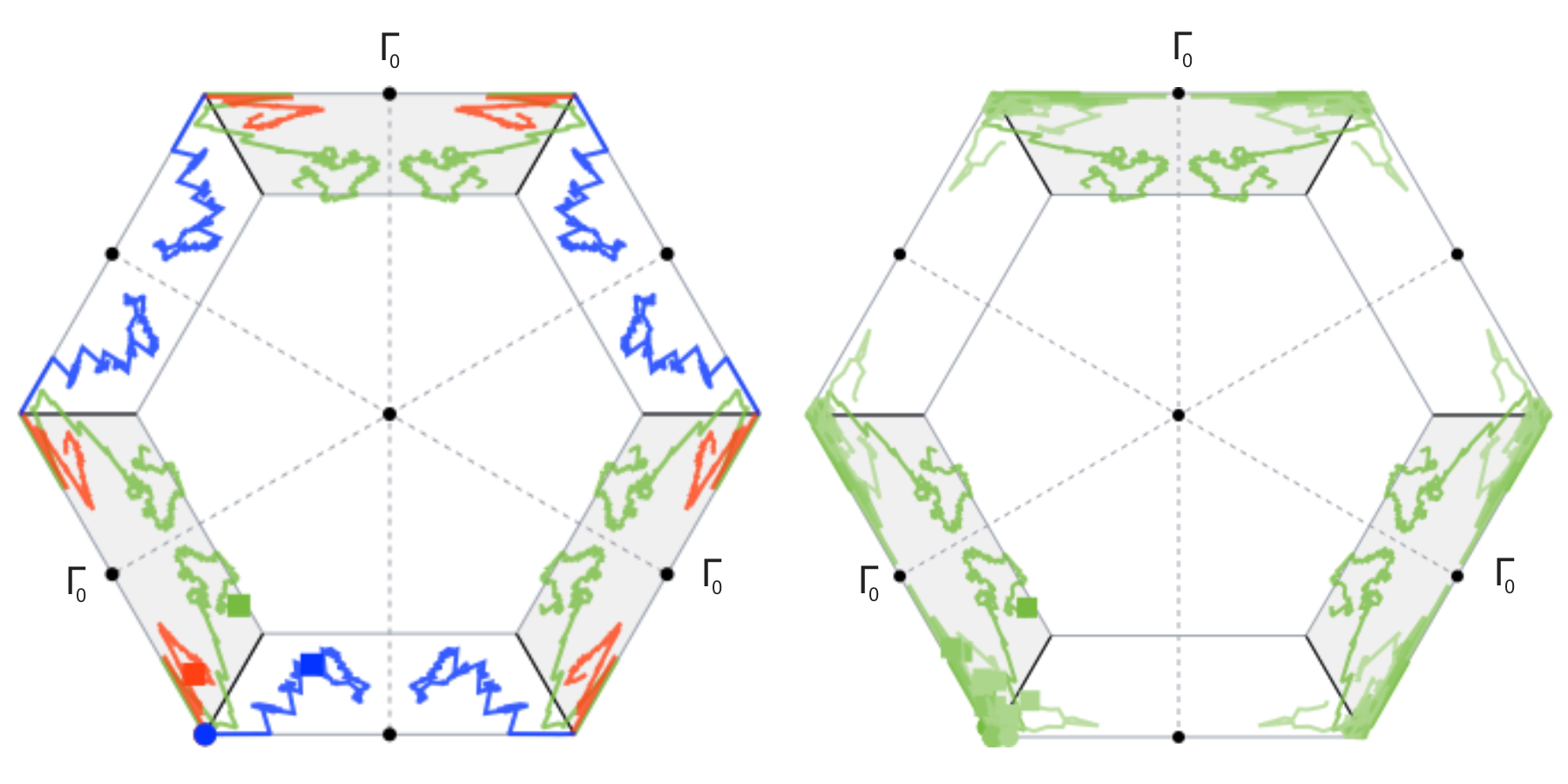}
  \vspace{-0.1in}
  \caption{\footnotesize Evolution of the directional spread of the dentrites in \cref{fig:workflow} (on the \emph{left}) and some of the paths in a decomposition of one of these dendrites (on the \emph{right}), for clarity we have stopped at depth $2$.}
  \label{fig:projected_trees}
\end{figure}

Hence, $\Dspace_t = d_{\rm int}^{-1} [0,t]$ are the points connected to $r_0$ by paths of length at most $t$, and we let $f_t (x) = x^\transpose \cdot Q_t \cdot x$ be the quadratic form that measures the directional spread of $\Dspace_t$.
By Theorem~\ref{thm:trace_measures_length}, the trace of $Q_t$ is the total length of this subtree.
The corresponding ellipsoid is $E_t = f_t^{-1} (1)$.
The polar ellipsoid, $E_t^*$, is unique if $Q_t$ has at least two non-zero eigenvalues, and we set $\Lambda (E_t^*) = \{ \frac{1}{2}, \frac{1}{2}, 0 \}$ if $\Lambda (E_t) = \{ 0, 0, 1 \}$.
Accordingly, we introduce
\begin{align}
  g_\Dspace \colon \Rspace_+ \to \Gspace \mbox{\rm ~~defined by~~}
  g_\Dspace (t) = \Gamma (E_t, E_t^*) 
\end{align}
for every $t \geq 0$.
It maps the tree to a curve that visualizes how the directional spread evolves as we move away from the root; see the left panel of Figure~\ref{fig:projected_trees}.\footnote{Since each pair of ellipsoids, $(E_t, E_t^*)$, maps to six points, we get six curves in $\Gspace$.
Any one of them will not necessarily be contained within the fundamental domain of $\Gspace$.
This is indeed the main reasons we use the hexagon---and not merely one of the quadrangles---since the former avoids sudden changes of direction when the curve is about to leave a quadrangle.}

\smallskip
It also makes sense to map individual paths in a decomposition of the tree to $\Gspace$; see the right panel of Figure~\ref{fig:projected_trees}.
Let for example $\gamma \subseteq \Dspace$ be the main path in the decomposition computed as described in Section~\ref{sec:4.1}; that is: the path that begins at $r_0$ and ends at another leaf of the tree.
The ellipsoid that corresponds to the first edge in $\gamma$ is thin, with $\Lambda (E) = \{0, 0, 1 \}$ and $\Lambda (E^*) = \{ \frac{1}{2}, \frac{1}{2}, 0 \}$.
The corresponding point in $\Gspace$ is $\Gamma_0 = \{-\frac{1}{2}, -\frac{1}{2}, 1\}$, which unfolds into three points, each shared by two quadrangles in the six-division of $\Gspace$.
The image of $g_\gamma \colon \Rspace_+ \to \Gspace$, is therefore a curve that starts at $\Gamma_0$ in $\Gspace$.

\subsection{Curves without Cusps}
\label{sec:4.5}

When we draw a curve within a single quadrangle of the six-division, its tangent vector gets reflected whenever the curve hits a side shared with a neighboring quadrangle.
The drawing is still continuous but has cusps needed to prevent entering other quadrangles.
This motivates us to draw the full hexagon and to prove that the image of $g_\gamma$ is differentiable if $\gamma$ is a generically smooth path in $\Rspace^3$.

\smallskip
We begin by adapting the quadratic form to the smooth case.
Let $\gamma \colon [0,L] \to \Rspace^3$ be an arc-length parametrization of a regular smooth path, i.e.\ the \emph{unit tangent}, $T(s)$, is defined for every $s \in [0,L]$.
We then introduce the quadratic form of such a path:
\begin{definition}
  \label{dfn:quadratic_form_in_limit}
  Let $\gamma \colon [0,L] \to \Rspace^3$ be the arc-length parametrization of a regular smooth path.
  Then
  \begin{align}
    \Qspread (\gamma) &= \int_{s=0}^{L}
      \left( T(s) \cdot T(s)^\transpose \right) \diff s 
      \label{eqn:smoothQspread}
  \end{align}
  and the corresponding quadratic form defined by $f_\gamma (x) = x^\transpose \cdot \Qspread (\gamma) \cdot x$ measures the directional spread of the path.
\end{definition}
To see that this is a sensible definition, consider a partition $0 = s_0 < s_1 < \ldots < s_n = L$ of $[0,L]$, write $\varphi_n \colon [0,L] \to \Rspace^3$ for the polygonal path with vertices $\gamma (s_i)$ for $0 \leq i \leq n$, and say a sequence of such polygonal paths \emph{rectifies} $\gamma$ if the length of the longest edge goes to zero and the length of the paths converges to the length of $\gamma$.
Importantly, the limit of the matrices whose corresponding quadratic forms measure the directional spread of the polygonal paths is the matrix defined in \eqref{eqn:smoothQspread}:
\begin{align}
  \lim\nolimits_{n \to \infty} \Qspread (\varphi_n) = \Qspread (\gamma) .
    \label{eqn:limit}
\end{align}

Assuming the curvature of $\gamma$ is non-zero at $s$, we can define the \emph{unit normal}, $N(s)$, and the \emph{unit binormal}, $B(s)$; see e.g.\ \cite{BrGi92}.
These two vectors, together with the unit tangent vector, $T(s)$, form an ortho-normal system classically referred to as the Frenet--Serre frame of $\gamma$ and $s$, and a \emph{singular point} is where this frame is not defined.
We call $\gamma$ \emph{tame} if the set of singular points is finite.
For each $0 \leq t \leq L$, write $\gamma_t$ for the portion of the path from $\gamma (0)$ to $\gamma (t)$, set $Q_t = \Qspread (\gamma_t)$, write $(E_t, E_t^*)$ for the corresponding pair of ellipsoids, and define $g_\gamma (t) = \Gamma (E_t, E_t^*)$.
Our goal is to prove that under the tameness assumption the image of $g_\gamma$ drawn in the hexplot has no cusps.
\begin{lemma}
  \label{lem:no_cusps}
  Let $\gamma \colon [0,L] \to \Rspace^3$ be the arc-length parametrization of a tame path in $\Rspace^3$.
  Then the image of $g_\gamma \colon (0,L) \to \Gspace$ is the trajectory of six continuously differentiable curves.
\end{lemma}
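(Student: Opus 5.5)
The plan is to write $g_\gamma$ as a composition of three maps and check that each is $C^1$ on the relevant domain. The pieces are $t \mapsto Q_t$, then $Q_t \mapsto$ (unordered) eigenvalues, then eigenvalues $\mapsto \Lambda(E_t) - \Lambda(E_t^*)$.

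The first map is easy. By Definition~\ref{dfn:quadratic_form_in_limit}, $Q_t = \int_0^t T(s) T(s)^\transpose \diff s$. Since $T$ is continuous, the fundamental theorem of calculus gives that $t \mapsto Q_t$ is $C^1$ with $\dot Q_t = T(t) T(t)^\transpose$. For $t>0$ the matrix $Q_t$ has rank at least $2$ unless $\gamma_t$ is a straight segment. On a straight initial segment, $g_\gamma$ is constant at $\Gamma_0$ by the convention of Section~\ref{sec:4.4}, so it is trivially smooth there. Tameness implies the curvature vanishes only at finitely many points, so $\gamma$ cannot be straight on an interval except possibly in degenerate situations. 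I expect that $Q_t$ is positive definite for all $t$ beyond the first curved portion, and then both $\Lambda(E_t)$ and $\Lambda(E_t^*)$ are given by the explicit formulas \eqref{eqn:Lambda} and \eqref{eqn:Lambdastar}. These are smooth (indeed rational) functions of the unordered triple $(\lambda_1,\lambda_2,\lambda_3)$ wherever at least two $\lambda_i$ are nonzero.

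The main obstacle is the middle step. Eigenvalues of a smoothly varying symmetric matrix are not $C^1$ as \emph{ordered} functions: at a crossing $\lambda_i(t_0)=\lambda_j(t_0)$ the sorted eigenvalues have a corner, and this corner is exactly the cusp seen when one draws in a single quadrangle. The fix is to follow \emph{unordered} branches. By Rellich's theorem, for a real-analytic one-parameter family of symmetric matrices the eigenvalues can be chosen as analytic functions $\lambda_1(t),\lambda_2(t),\lambda_3(t)$. For merely $C^1$ families I would instead invoke the first-order perturbation formula $\dot\lambda_i = e_i^\transpose \dot Q_t e_i = \scalprod{T(t)}{e_i}^2$. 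This holds on intervals where the eigenvalues are simple. At an isolated crossing, the two one-sided derivatives of the crossing branches are the eigenvalues of the compression of $\dot Q_t$ to the two-dimensional eigenspace. Labelling the branches so that each continues with its matching derivative (the ``straight-through'' labelling rather than the sorted one) gives $C^1$ branches. Here tameness is used to ensure crossings are isolated: at a point with a Frenet frame, $\dot Q_t$ has rank one and generically splits the double eigenspace, and there are only finitely many singular points to handle separately. Triple crossings need a similar compression argument. I would first try to show they occur only at isolated $t$, or are handled by the same rule.

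With three $C^1$ branches $\lambda_i(t)$ in hand, the composition with the smooth map from eigenvalue triples to $\Gspace$, taken over the six permutations of the labels, yields six $C^1$ curves whose union of images is the image of $g_\gamma$. This proves the lemma. Continuity across the finitely many singular points would follow from Theorem~\ref{thm:bi-Lipschitz_homeomorphism} together with continuity of $Q_t$. Differentiability there follows because the formula $\dot\lambda_i = \scalprod{T}{e_i}^2$ involves only $T$, which is continuous everywhere.
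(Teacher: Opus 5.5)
Your decomposition matches the paper's. The map $t\mapsto Q_t$ is $C^1$ by the fundamental theorem of calculus. Tameness excludes a straight initial segment, so $Q_t$ has at least two non-zero eigenvalues for $t>0$, which makes \eqref{eqn:Lambda} and \eqref{eqn:Lambdastar} smooth. The six curves then come from permuting three $C^1$ eigenvalue branches. Your expectation that $Q_t$ eventually becomes positive definite is false for planar paths, whose $Q_t$ has rank two throughout, but you only use rank at least two, so this is harmless.

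The gap is in the middle step, the only non-trivial one: you never actually obtain $C^1$ eigenvalue branches. Rellich's theorem needs real-analyticity, which a smooth path does not give. Your substitute depends on crossings being isolated, which you try to derive from tameness, and that derivation does not work:
\begin{itemize}
\item The velocity $\dot Q_t = T(t)T(t)^\transpose$ has rank one at every $t$, with or without a Frenet frame, so tameness plays no role in the splitting.
\item At a double eigenvalue with eigenspace $V$, the first-order splitting is governed by the compression $P_V\,T T^\transpose P_V$. Its eigenvalues $\|P_V T(t_0)\|^2$ and $0$ coincide exactly when $T(t_0)$ is orthogonal to $V$, i.e.\ spans the eigenspace of the simple eigenvalue. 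Nothing in your argument excludes this, and ``generically'' is not a proof.
\item The same first-order degeneracy occurs at every triple eigenvalue, where the compression has eigenvalues $1,0,0$; you leave this case open.
\item You do not treat the singular points, where crossings could a priori accumulate.
\end{itemize}
Your closing sentence does not close these holes. The formula $\dot\lambda_i = \scalprod{T}{e_i}^2$ is valid only where $\lambda_i$ is simple, and the eigenvectors $e_i(t)$ need not converge as $t$ approaches a crossing. So continuity of $T$ alone does not show that $\dot\lambda_i$ extends continuously across a crossing.

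The missing ingredient is the result the paper invokes from Kato's perturbation theory (Chapter~II, \S6.8). A family of real symmetric matrices that is continuously differentiable in a real parameter admits eigenvalue functions that are continuously differentiable on the whole interval. This holds with no hypothesis on where or how often eigenvalues collide. Applied to $t\mapsto Q_t$, it gives the three $C^1$ branches $\delta_1,\delta_2,\delta_3$ at once and makes your crossing analysis unnecessary. Tameness is then needed only to guarantee that at least two eigenvalues are non-zero for $t>0$, exactly as in the paper's proof.
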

\begin{proof}
  By the fundamental theorem of calculus, $Q_t \colon [0,L] \to \Rspace^{3 \times 3}$ is a smooth family of symmetric bilinear forms.
  Moreover, its eigenvalues assemble in three continuously differentiable maps $\delta_i (t) \colon [0,L] \to \mathbb{R}$, for $i = 1,2,3$; see \cite[Chapter~II, \S~6.8]{Kat76}. 
  If $t>0$ then $Q_t \neq 0$, and hence $\delta_i(t) > 0$ for at least one $i$, so we can then normalize to obtain a curve in $\Delta$.
  Furthermore, by the tameness assumption at least two of the eigenvalues are non-zero.
  It follows that none of the associated ellipsoids maps to a vertex of $\Delta$, so the polar ellipsoid is uniquely defined for every $s \in (0, L)$.
  By composing with $\Gamma \colon \Delta \to \Gspace$, we obtain a continuously differentiable map, and taking into account the permutations of $\delta_1, \delta_2, \delta_3$ yields the image of the map $g_\gamma \colon (0,L) \to \Gspace$ and shows the claim.
\end{proof}

\section{Discussion}
\label{sec:5}

Motivated by the study of phenotypes of neuronal dendrites, we show how to use quadratic forms to measure the morphology of rooted trees in $3$-dimensional space, and how to visualize the results in what we call the \emph{hexplot of ellipsoid types}.
Indeed, we consider the introduction of the hexplot together with its Fisher metric as the main contribution of this paper.
The reported work suggests directions for further inquiry and raises yet open questions.
We list them in the order of increasing generality.
\smallskip \begin{itemize}
  \item Can we use quadratic forms to probe additional features of the dendritic phenotype, such as the chirality of paths, and the correlation between neighboring neurons?

  \item The hexplot reflects the relative size of eigenvalues and deliberately ignores absolute size and the direction of the eigenvectors.
  Are there elegant ways to add this information without substantially increasing the complexity of the visualization?

  \item How does the framework introduced in this paper extend to more general settings, such as quadratic forms defined by matrices that are not necessarily positive semi-definite, and to dimensions beyond three?
\end{itemize} \smallskip
A broad quest that relates to the second of the above items is the mathematical description of complex traits of neuronal dendrites, such as the uptake of information or other resources that may guide the way they explore the ambient space.
Can quadratic forms and the related tools developed in this paper contribute to this effort of rationalizing scientific data?


\end{document}